\documentclass[a4paper,12pt]{amsart}
\usepackage[numbers, sort&compress]{natbib}
\usepackage[left=0.9in, right=0.9in, , bottom = 1.2in, top = 1.2in]{geometry}
\usepackage[utf8]{inputenc}
\usepackage[all]{xy}
\usepackage{amssymb}
\usepackage{amsmath}
\usepackage{csquotes}
\usepackage{enumerate}
\usepackage{graphicx}
\usepackage{amsaddr}
\usepackage{color}
\usepackage{mwe}
\usepackage{subfig}
\usepackage{appendix}
\usepackage{multirow}

\usepackage{soul}
\usepackage{xcolor}
\newcommand\independent{\protect\mathpalette{\protect\independenT}{\perp}}
\def\independenT#1#2{\mathrel{\rlap{$#1#2$}\mkern2mu{#1#2}}}

\usepackage{tikz}
\usetikzlibrary{arrows.meta,shapes}
\usetikzlibrary{arrows,shapes.arrows,shapes.geometric,shapes.multipart,
decorations.pathmorphing,positioning,shapes.swigs,}
\usepackage{setspace}
\newtheorem{lemma}{Lemma}

\newtheorem{proposition}{Proposition}
\newtheorem{definition}{Definition}
\newtheorem{assumption}{Assumption}

\MakeOuterQuote{"}\EnableQuotes
\DeclareUnicodeCharacter{00A0}{ }

\title[]{Effects conditional on post-treatment events generated by independent mechanisms}

\author{Marco Piccininni$^{1}$, Mats J. Stensrud$^{1}$} \address{ \small $^1$ Institute of Mathematics, École Polytechnique Fédérale de Lausanne, Switzerland \\}

\begin{document}

\begin{abstract} In both observational studies and randomized trials, post-treatment events such as dropout, nonadherence, and truncation by death occur frequently. In some studies, conditioning on post-treatment variables is a deliberate strategy to isolate particular treatment effects on the outcome. However, naive comparisons of outcomes conditional on post-treatment events generally lack a causal interpretation, even when treatment is randomly assigned. There exist causal estimands that account for post-treatment events, including survivor average causal effects and conditional separable effects, but identification usually requires measurement of common causes of the post-treatment event and the outcome. In this article, we show that such measurements are not always necessary. Conceptually, what we require is that the treatment and other unmeasured causes of the outcome generate the post-treatment event through ``independent mechanisms''. Then, conditional separable effects and survivor average causal effects can be identified without adjustment for common causes of the post-treatment event and the outcome. We illustrate the results in studies with truncating events, differential nonadherence, and the birth weight paradox.
\end{abstract}

\maketitle

Keywords: truncation by death, competing events, separable effects, survivor
average causal effects, birth weight paradox, adherence, multiplicative
survival model

\clearpage

\section{Introduction}
\label{sec: intro}

Many causal analyses require explicit consideration of post-treatment events. Even in randomized controlled trials (RCTs), researchers need to account for individuals who drop out of the study or individuals who experience competing events. Sometimes conditioning on post-treatment variables is also a deliberate choice, for example, when attempting to isolate the direct effect of the treatment on the outcome.

However, conditioning on a post-treatment variable leads to problems, even in randomized trials. To fix ideas, suppose the post-treatment variable is affected by treatment, and shares common causes with the outcome. Then, we expect a non-null association between the treatment and the outcome conditional on the post-treatment variable even if the treatment does not cause the outcome for any unit\cite{hernan2010causal}. This phenomenon is often referred to as ``collider bias''\cite{pearl2018book, holmberg2022collider,banack2024collider}, referring to the graphical relationships of such a post-treatment variable in a causal directed acyclic graph (DAG)\cite{banack2024collider, holmberg2022collider}. The birth weight paradox is a canonical example of a counter-intuitive finding attributed to such collider bias\cite{hernandez2006birth, pearl2018book, vanderweele2014commentary, banack2024collider}.

Several causal estimands have been proposed to target effects that account for post-treatment events\cite{robins1992identifiability}, such as separable effects\cite{robins2010alternative,robins2022interventionist,stensrud2022separable, stensrud2021generalized,stensrud2023conditional,hernan2010causal}. Conditional separable effects can allow the outcome to be undefined if the post-treatment event happens\cite{stensrud2023conditional,young2021identified}, which is, e.g., relevant in truncation by death settings. The existing identification results for these effects require measurements of confounders of the post-treatment variable and the outcome\cite{stensrud2023conditional,malinsky2019potential} or proxy variables\cite{park2024proximal}.

In this work, we present new results ensuring that, in certain settings, conditional separable effects are identified even without measuring confounders of the post-treatment and outcome variable. This is possible when the mechanism through which treatment affects the post-treatment event is independent of mechanisms involving other causes of the outcome. The result builds on a known property of survival multiplicative models\cite{hernan2010causal, hernan2004structural}. We present several illustrative examples: studies where participants experience truncating events, studies with differential non-adherence, and the famous birth weight paradox. Furthermore, we show that a similar result holds for the survivor average causal effect\cite{robins1986new, rubin2006causal, stensrud2023conditional}, and we also draw connections to traditional instrumental variable analysis\cite{angrist1995identification}.

\section{Setup and notation}

Consider an experiment in which individuals are randomly assigned to treatment ($A=1$) or control ($A=0$). The outcome of interest is a numerical variable $Y$, and let $D \in \{0,1\}$ be an indicator of a post-treatment event, possibly affected by $A$ and affecting $Y$. Let $U$ denote exogenous, unmeasured common causes of $D$ and $Y$, which might exist because only $A$ is randomly assigned in our setting. To simplify the presentation, we assume $U$ to be discrete.

The causal model is represented in the DAG in Figure \ref{fig:dag_general}, and we will assume that the graph encodes a Finest Fully Randomized Causally Interpreted Structured Tree Graph model (FFRCISTG)\cite{robins1986new, richardson2013single}.

\begin{figure}[!ht]
\centering
    \begin{tikzpicture}
       \begin{scope}[every node/.style={thick,draw}]
       \node[name=A, shape=ellipse] at (-2,0){$A$};
       \node[name=D, shape=ellipse] at (0,0){$D$};
       \node[name=Y, shape=ellipse] at (3,0){$Y$};
       \node[name=U, shape=ellipse, dashed] at (1.5,2){$U$};
    \end{scope}
        
    \begin{scope}[>={Stealth[black]},
                  every node/.style={fill=white,circle},
                  every edge/.style={draw=black,very thick}]
        \path[->] (U) edge (Y);
        \path[->] (U) edge (D);            
        \path[->] (A) edge (D);
        \path[->] (D) edge (Y);
        \path[->] (A) edge[bend right] (Y);
    \end{scope}
    \end{tikzpicture}
\caption{Causal directed acyclic graph representing the assumed relationship between treatment ($A$), outcome ($Y$), post-treatment variable ($D$) and potential common causes of $D$ and $Y$ ($U$). Dashed nodes represent unmeasured variables.}
\label{fig:dag_general}
\end{figure}
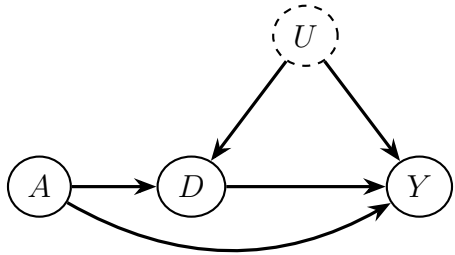

In such a trial the average (marginal) effect of $A$ on $Y$ is easy to identify when $Y$ is measured in all units; because there is no confounding the association between the treatment and the outcome represents the marginal effect\cite{hernan2010causal}.

Suppose that we are interested in isolating the effect of $A$ on $Y$, for example, because the outcome $Y$ is undefined, or not of interest, when the post-treatment event $D$ occurs. More formally, we represent this by giving the variable $Y$ an arbitrary value when $D=1$, indicating an ``undefined'' state. This representation aligns with the causal DAG in Figure \ref{fig:dag_general}, in which the edge $D \rightarrow Y$ allows the distribution of $Y$ to depend on $D$ given $A$ and $U$. Marginal expectations of $Y$ are therefore not of practical interest in this setting, and several different causal estimands have been suggested to quantify effects, see for example \cite{robins1986new, rubin2006causal,stensrud2023conditional,wang2017identification}. The existing strategies require access to measurements of the common causes $U$ or their proxies\cite{park2024proximal}. However, we will show that when $D$ is generated in a particular way that can sometimes be justified in practice, certain causal effects can be estimated without adjusting for $U$.

To formalize the causal parameters and clarify our assumptions we will rely on potential outcomes\cite{hernan2010causal}. In particular, we will use notation $Y^{a=a'}$ to indicate the value of the outcome had, possibly contrary to fact, the treatment variable $A$ been set to $a'$.

While we use the story of a randomized trial to simplify the exposition, our reasoning is easy to  adapt to an observational study where our causal model is plausible after having restricted the analysis on some values of baseline covariates.

\section{Independent mechanisms}

Consider a large RCT that compares quality of life at 6 months ($Y$) between individuals who undergo a plastic surgery ($A=1$) and no treatment ($A=0$). The surgery takes place immediately after randomization. Some participants died before 6 months of follow-up ($D=1$). Thus, the outcome $Y$ is only defined for individuals who survived until the end of the study ($D=0$). The post-treatment event $D$ here represents what is called a ``truncating'' event\cite{stensrud2023conditional}, sometimes broadly classified as an ``intercurrent event''\cite{stensrud2022translating, ich2019addendum} or a ``competing event''\cite{cottone2023estimand}, although the precise definition of these terms is debated and varies across literatures\cite{young2021identified}.  The plastic surgery of interest is only intended to have aesthetic effects, but the procedure has some risks and leads to a higher mortality rate in the treatment arm compared to the control arm. We also expect unmeasured common causes ($U$) of death and quality of life (e.g., comorbidities, socioeconomic status etc.).

The data generating process of this hypothetical trial is reflected in the causal DAG in Figure \ref{fig:dag_general}. Without further assumptions, the comparison of outcomes under treatment and control, restricted to survivors, cannot be interpreted causally: because $D$ is affected by treatment and shares common causes with the outcome, conditioning on $D$ generally induces a non-causal association between treatment and outcome, that is, collider stratification bias\cite{hernan2010causal, hernan2004structural, banack2024collider}.

Suppose that, during the study period, treatment only affects mortality through adverse events, for example because the anesthetic can induce fatal reactions. Let $D_A \in \{0,1\}$ indicate a fatal anesthetic-related adverse event that occurred in a short time window after randomization. For our methodological results, it is important that $D_A$ mediates all the effect of $A$ on $D$. In particular, the plastic surgery only affects mortality through the anesthetic-related adverse events. Although $D_A$ is unmeasured in the study, it will have an important role in our arguments.

We assume that $D_A$ is independent of other unmeasured causes of $D$ or $Y$, as described in Figure \ref{fig:dag_augmented}, which is an augmented\cite{hernan2010causal} version of the DAG in Figure \ref{fig:dag_general}. In Figure \ref{fig:swig_augmented} we present the corresponding single world intervention graph (SWIG) representing the intervention $A=a'$ \cite{richardson2013single}.

In the surgery example, this causal structure is plausible if fatal anesthetic-related adverse events are unrelated to characteristics of the participants that affect other causes of death or future quality of life. For example, certain severe reactions to anesthesia are thought to be induced by genetic mutations and this susceptibility does not affect aspects of everyday life\cite{schneiderbanger2014management}.

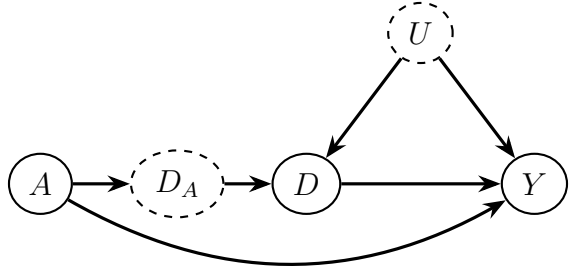
\begin{figure}[!ht] 
\centering
    \begin{tikzpicture}
       \begin{scope}[every node/.style={thick,draw}]
       \node[name=A, shape=ellipse] at (-3.5,0){$A$};
       \node[name=D, shape=ellipse] at (0,0){$D$};
       \node[name=D_A, shape=ellipse, dashed] at (-1.7,0){$D_A$};
       \node[name=Y, shape=ellipse] at (3,0){$Y$};
       \node[name=U, shape=ellipse, dashed] at (1.5,2){$U$};
    \end{scope}
        
    \begin{scope}[>={Stealth[black]},
                  every node/.style={fill=white,circle},
                  every edge/.style={draw=black,very thick}]
        \path[->] (U) edge (Y);
        \path[->] (U) edge (D);            
        \path[->] (A) edge (D_A);
        \path[->] (D_A) edge (D);
        \path[->] (D_A) edge[bend right=20] (Y);
        \path[->] (D) edge (Y);
        \path[->] (A) edge[bend right=35] (Y);
    \end{scope}
    \end{tikzpicture}
    \caption{Causal directed acyclic graph representing the assumed relationship between treatment ($A$), outcome ($Y$), post-treatment variable ($D$) and potential common causes of $D$ and $Y$ ($U$). The node $D_A$ represents the specific type of post-treatment event that can be affected by the treatment. Dashed nodes represent unmeasured variables.}
    \label{fig:dag_augmented}
\end{figure}

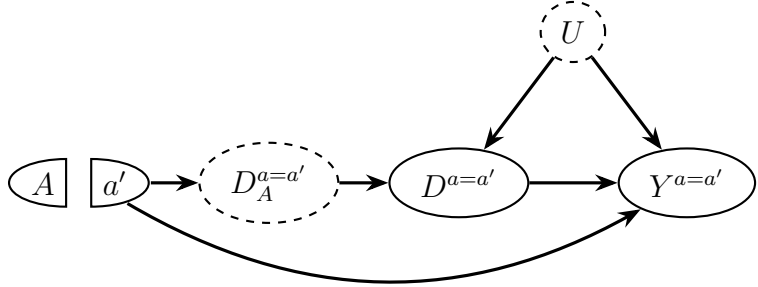
\begin{figure}[!ht] 
\centering
    \begin{tikzpicture}
       \begin{scope}[every node/.style={thick,draw}]
       \node[name=A,shape=swig vsplit] at (-5,0) {
                                              \nodepart{left}{$A$}
                                              \nodepart{right}{$a'$} };
       \node[name=D, shape=ellipse] at (0,0){$D^{a=a'}$};
       \node[name=D_A, shape=ellipse, dashed] at (-2.5,0){$D^{a=a'}_A$};
       \node[name=Y, shape=ellipse] at (3,0){$Y^{a=a'}$};
       \node[name=U, shape=ellipse, dashed] at (1.5,2){$U$};
    \end{scope}
        
    \begin{scope}[>={Stealth[black]},
                  every node/.style={fill=white,circle},
                  every edge/.style={draw=black,very thick}]
        \path[->] (U) edge (Y);
        \path[->] (U) edge (D);            
        \path[->] (A) edge (D_A);
        \path[->] (D_A) edge (D);
        \path[->] (D_A) edge[bend right=20] (Y);
        \path[->] (D) edge (Y);
        \path[->] (A) edge[bend right] (Y);
    \end{scope}
    \end{tikzpicture}
    \caption{Single world intervention graph corresponding to the directed acyclic graph in Figure \ref{fig:dag_augmented} for the intervention $A=a'$.}
    \label{fig:swig_augmented}
\end{figure}

Suppose that the causal graphs in Figure \ref{fig:dag_augmented} and \ref{fig:swig_augmented} represent our setting of interest. More formally, consider the following assumption:
\begin{assumption} \label{ass:swig_augmented}
We assume a FFRCISTG model associated with the graph in Figure \ref{fig:dag_augmented}.
\end{assumption}
It follows from Assumption \ref{ass:swig_augmented}  that the distribution of counterfactual variables factorizes according to the SWIG in Figure \ref{fig:swig_augmented} and modularity holds\cite{richardson2013single}.

We also invoke a mild positivity\cite{hernan2010causal} condition, requiring that in each treatment arm some participant does not experience the post-treatment event:
\begin{assumption} \label{ass:positivity}
$\Pr(A=a)>0$ and $\Pr(D=0 \mid A=a)>0$ for any $a \in \{0,1\}.$
\end{assumption}

Because $D_A$ indicates the occurrence of a subtype of the post-treatment event $D$, we have the following determinism:
\begin{assumption} \label{ass:D_Asufficient_po}
If $D_A^{a=a'}=1$ then $D^{a=a'}=1$, for every $a'$.
\end{assumption}
Assumption \ref{ass:D_Asufficient_po} simply states that when a participant dies due to anesthetic-related adverse events, the participant dies. This is an uncontroversial assumption in our surgery example.

\begin{lemma} \label{lemma:multiplicative_survival}
Under Assumptions \ref{ass:swig_augmented}, \ref{ass:positivity}, and \ref{ass:D_Asufficient_po}, we have that for every $a' \in \{0,1\}$
$$\Pr(D^{a=a'}=0 \mid U=u)= \Pr(D=0 \mid U=u, D_A=0) \Pr(D_A=0 \mid A=a').$$
\end{lemma}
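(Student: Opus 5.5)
We work in the counterfactual world $A=a'$, decompose the event $\{D^{a=a'}=0\}$ along $D_A^{a=a'}$, and then use the SWIG factorization and modularity to return to observed quantities.

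\textbf{Decomposition.} By Assumption \ref{ass:D_Asufficient_po}, $D^{a=a'}=0$ implies $D_A^{a=a'}=0$. Hence
$$\Pr(D^{a=a'}=0\mid U=u)=\Pr(D^{a=a'}=0\mid U=u, D_A^{a=a'}=0)\,\Pr(D_A^{a=a'}=0\mid U=u).$$

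\textbf{Second factor.} In the SWIG of Figure \ref{fig:swig_augmented}, the only parent of $D_A^{a=a'}$ is the fixed node $a'$, and $U$ is exogenous. So $D_A^{a=a'}\independent U$, and $\Pr(D_A^{a=a'}=0\mid U=u)=\Pr(D_A^{a=a'}=0)$. Applying the same independence in the factual graph of Figure \ref{fig:dag_augmented}, where $A$ has no parents, gives $\Pr(D_A^{a=a'}=0)=\Pr(D_A=0\mid A=a')$.

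\textbf{First factor.} In the SWIG the parents of $D^{a=a'}$ are $D_A^{a=a'}$ and $U$; there is no direct edge from $A$ to $D$. Modularity of the FFRCISTG model (Assumption \ref{ass:swig_augmented}) therefore identifies $\Pr(D^{a=a'}=0\mid U=u,D_A^{a=a'}=0)$ with the factual kernel $\Pr(D=0\mid U=u,D_A=0)$. This kernel does not depend on $a'$, because $D$ has no arrow from $A$. One could also write it as $\Pr(D=0\mid U=u,D_A=0,A=a')$ and drop $A$ using d-separation of $D$ and $A$ given $\{D_A,U\}$. Multiplying the two factors gives the claimed formula.

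\textbf{Main obstacle: positivity.} The conditionals in the first factor must be well defined. We need $\Pr(U=u)>0$, which holds for every $u$ in the support of the discrete $U$. We also need $\Pr(D_A^{a=a'}=0)>0$. By Assumption \ref{ass:D_Asufficient_po} we have $\Pr(D_A=0\mid A=a')\ge \Pr(D=0\mid A=a')>0$, where the last inequality is Assumption \ref{ass:positivity}. Combined with independence from $U$, this gives $\Pr(U=u,D_A=0,A=a')>0$. One further subtlety: the right-hand side uses $\Pr(D=0\mid U=u,D_A=0)$ without conditioning on $A$. This must be justified either as the common value over both arms, using that $A\independent U$ and the $D$-kernel does not involve $A$, or by a formal counterfactual-to-factual step via consistency.
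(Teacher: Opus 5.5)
Your proof is correct and follows essentially the same route as the paper. Both decompose $\{D^{a=a'}=0\}$ on $D_A^{a=a'}$, discard the $D_A^{a=a'}=1$ branch via Assumption \ref{ass:D_Asufficient_po}, and use the SWIG factorization and modularity from Assumption \ref{ass:swig_augmented} to replace each counterfactual factor by its observed kernel. Your explicit positivity argument, via $\Pr(D_A=0\mid A=a')\ge\Pr(D=0\mid A=a')>0$ and the independence of $U$ from $(A,D_A)$, fills in what the paper leaves as a one-line remark. One wording point: the step $\Pr(D_A^{a=a'}=0)=\Pr(D_A=0\mid A=a')$ rests on $D_A^{a=a'}$ being independent of $A$ plus consistency (equivalently, modularity), not on the independence from $U$.
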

See Appendix \ref{app:proofs} for a formal proof.

Lemma \ref{lemma:multiplicative_survival} gives conditions under which $D$ follows a so-called ``multiplicative survival model''\cite{hernan2004structural,hernan2010causal}: the probability of ``surviving'' ($D=0$) when $U=u$ and $A$ is fixed to $a'$ is equal to the multiplication of a function of $a'$ and a function of $u$. Conceptually, the multiplicative survival model describes survival as the result of avoiding death from two independent mechanisms: one involving only $A$, and one involving only $U$. We can interpret Assumptions \ref{ass:swig_augmented} and \ref{ass:D_Asufficient_po}, together, as a claim about an independence of the mechanism through which $A$ causes $D$ and the mechanism through which $U$ causes $D$. To fix ideas, we give an example in which a multiplicative survival model is induced by a sufficient-component cause model\cite{hernan2010causal} where there is no ``interaction'' between $A$ and $U$ in determining $D$ (see Appendix \ref{app:causal_pies}).

In multiplicative survival models independent causes of death remain independent when conditioning on survival, despite the collider structure \cite{hernan2004structural, hernan2010causal}. Specifically, the distribution of $U$ does not differ under different interventions on $A$, even when conditioning on $D=0$.

\begin{lemma} \label{lemma:useful_factorization}
Under Assumptions \ref{ass:swig_augmented}, \ref{ass:positivity}, and \ref{ass:D_Asufficient_po}, we have that
\begin{align*}
\Pr(U=u \mid D^{a=0}=0)=\Pr(U=u \mid D^{a=1}=0).
\end{align*} 
\end{lemma}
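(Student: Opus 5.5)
The plan is to apply Bayes' rule to the counterfactual event $\{D^{a=a'}=0\}$ and then use the factorization from Lemma \ref{lemma:multiplicative_survival}. In the factorization, the dependence on $a'$ separates from the dependence on $u$, so it cancels between numerator and denominator. Fix $a' \in \{0,1\}$.

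By Bayes' rule,
\begin{align*}
\Pr(U=u \mid D^{a=a'}=0) = \frac{\Pr(D^{a=a'}=0 \mid U=u)\Pr(U=u \mid \ldots)}{\Pr(D^{a=a'}=0)},
\end{align*}
where the prior factor on $U$ should be the marginal $\Pr(U=u)$. To justify this, note that $U$ is exogenous and has no incoming edges in the SWIG of Figure \ref{fig:swig_augmented}, so its distribution does not depend on the intervention. More precisely, we write
\begin{align*}
\Pr(U=u \mid D^{a=a'}=0) = \frac{\Pr(D^{a=a'}=0 \mid U=u)\Pr(U=u)}{\sum_{v}\Pr(D^{a=a'}=0 \mid U=v)\Pr(U=v)},
\end{align*}
with the sum taken over the support of $U$. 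This sum is legitimate because $U$ is discrete.

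Next, we substitute Lemma \ref{lemma:multiplicative_survival} into both the numerator and each term of the denominator. Every term then carries the common factor $\Pr(D_A=0 \mid A=a')$, which does not depend on $u$ or $v$, so it cancels. What remains is
\begin{align*}
\frac{\Pr(D=0 \mid U=u, D_A=0)\Pr(U=u)}{\sum_v \Pr(D=0 \mid U=v, D_A=0)\Pr(U=v)},
\end{align*}
an expression free of $a'$. Evaluating it at $a'=0$ and at $a'=1$ gives the same value, which proves the claim.

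The only delicate point is that the cancellation and the conditioning must be well defined. We need $\Pr(D^{a=a'}=0)>0$ and $\Pr(D_A=0 \mid A=a')>0$.
\begin{itemize}
\item Randomization together with Assumption \ref{ass:swig_augmented} gives $\Pr(D^{a=a'}=0)=\Pr(D=0\mid A=a')$, which is positive by Assumption \ref{ass:positivity}.
\item Lemma \ref{lemma:multiplicative_survival} then expresses this positive quantity as an average over $U$ of terms containing the factor $\Pr(D_A=0\mid A=a')$, so that factor must also be positive and dividing by it is allowed.
\item The conditional probabilities $\Pr(D=0\mid U=u, D_A=0)$ are presumably already well defined, since the statement of Lemma \ref{lemma:multiplicative_survival} uses them.
\end{itemize}
I expect no step to be a real obstacle beyond these positivity checks.
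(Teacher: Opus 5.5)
Your proposal is correct and matches the paper's proof essentially step for step: you apply Bayes' rule with the marginal $\Pr(U=u)$, substitute Lemma \ref{lemma:multiplicative_survival} into the numerator and denominator, and cancel the common factor $\Pr(D_A=0\mid A=a')$ to obtain an expression free of $a'$. Your explicit positivity check, which deduces $\Pr(D_A=0\mid A=a')>0$ from $\Pr(D=0\mid A=a')>0$ via Lemma \ref{lemma:multiplicative_survival}, is a slightly more careful version of the paper's one-line remark that Assumption \ref{ass:positivity} makes the conditional quantities well defined.
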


Lemma \ref{lemma:useful_factorization} is proven in Appendix \ref{app:proofs}.

\section{Causal effects}

In our surgery example, quality of life can only be assessed in those who remain alive. An intervention that prevents individuals from dying is practically infeasible, and causal effects requiring such interventions are not of much practical relevance\cite{stensrud2023conditional}. For example, the commonly targeted controlled direct effect (i.e., $\mathbb{E}(Y^{a=1,d=0})-\mathbb{E}(Y^{a=0,d=0})$)\cite{robins1992identifiability, hernan2010causal} representing the effect of surgery under an intervention that forces $D=0$ seems unsuitable to inform decisions when $D$ is death.

Suppose, however, that the doctors are interested in whether the surgical procedure can be improved. That is, they want to create a procedure, without dangerous complications, that otherwise gives the same aesthetic result. To formalize this line of thinking, we interpret the existing treatment $A$ as having two components\cite{robins2010alternative,stensrud2023conditional}: a component $A_Y$ that only affects $Y$ directly, and a component $A_D$ that affects $D_A$. In our example, $A_D$ could represent the chemical compound of the standard anesthetic which can induce severe adverse reactions. Instead, $A_Y$ represents the surgical operation itself, which affects aesthetic appearance, and thus quality of life. In the observed RCT, we can recode those randomized to treatment $A=1$ as receiving both $A_D=1$ and $A_Y=1$, and, similarly, those randomized to control $A=0$ as receiving both $A_D=0$ and $A_Y=0$. We assume that an intervention setting $A$ to a specific value is equivalent to setting both $A_D$ and $A_Y$ to the same value\cite{robins2010alternative,robins2022interventionist,stensrud2023conditional}. More formally, the notation $Y^{a_D=a',a_Y=a}$ represents potential outcomes for $Y$ under an intervention on both $A_D$ and $A_Y$:

\begin{assumption} \label{ass:treatment_decomposition}
Treatment $A$ can be decomposed into $A_D \in \{0,1\}$ and  $A_Y \in \{0,1\}$, with $A = A_D = A_Y$ in the observed data.
For every $a' \in \{0,1\}$ and every variable $V$ in our causal model, $V^{a_D=a',a_Y=a'}=V^{a=a'}$. 
\end{assumption}

Although $A_D$ and $A_Y$ always coincide with the value of $A$ in the data, we can envision potential interventions setting discordant values for these two variables. Consider a hypothetical four-arm trial where we randomize both $A_D$ and $A_Y$. Some participants are randomized to have the usual anesthetic and the surgery ($A_D=1,A_Y=1$), some to have neither ($A_D=0,A_Y=0$), some to undergo the surgery but with a special anesthetic drug that does not have the dangerous component ($A_D=0,A_Y=1$), and some to take the standard anesthetic without undergoing the surgery ($A_D=1,A_Y=0$). Obviously, the last two arms were not observed, but we can nevertheless envision such a trial, and we can represent the trial in a DAG and a SWIG (see Figures \ref{fig:dag_augmented_4arm} and \ref{fig:swig_augmented_4arm}, respectively).

\begin{assumption} \label{ass:swig_augmented_4arm}
For the hypothetical four-arm trial where both $A_D$ and $A_Y$ are randomized independently, assume a FFRCISTG model associated with the graph in Figure \ref{fig:dag_augmented_4arm}.
\end{assumption}

\begin{figure}[!ht] 
\centering
    \begin{tikzpicture}
       \begin{scope}[every node/.style={thick,draw}]
       \node[name=A_D, shape=ellipse] at (-3.4,0){$A_D$};
       \node[name=A_Y, shape=ellipse] at (-3.4,-1.5){$A_Y$};
       \node[name=D, shape=ellipse] at (0,0){$D$};
       \node[name=D_A, shape=ellipse, dashed] at (-1.6,0){$D_A$};
       \node[name=Y, shape=ellipse] at (3,0){$Y$};
       \node[name=U, shape=ellipse, dashed] at (1.5,2){$U$};
    \end{scope}
        
    \begin{scope}[>={Stealth[black]},
                  every node/.style={fill=white,circle},
                  every edge/.style={draw=black,very thick}]
        \path[->] (U) edge (Y);
        \path[->] (U) edge (D);
        \path[->] (A_D) edge (D_A);
        \path[->] (D_A) edge (D);
        \path[->] (D_A) edge[bend right=25] (Y);
        \path[->] (D) edge (Y);
        \path[->] (A_Y) edge[bend right] (Y);
    \end{scope}
    \end{tikzpicture}
    \caption{Causal directed acyclic graph representing a hypothetical trial in which $A_D$ and $A_Y$ are randomized independently. The nodes $A_D$ and $A_Y$ represent the components of the treatment that affect directly only $D_A$ and $Y$, respectively. Dashed nodes represent unmeasured variables.}
    \label{fig:dag_augmented_4arm}
\end{figure}
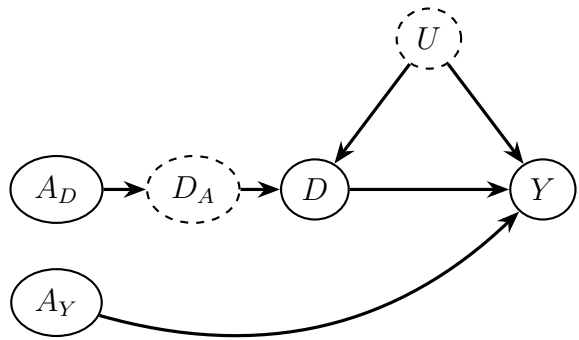

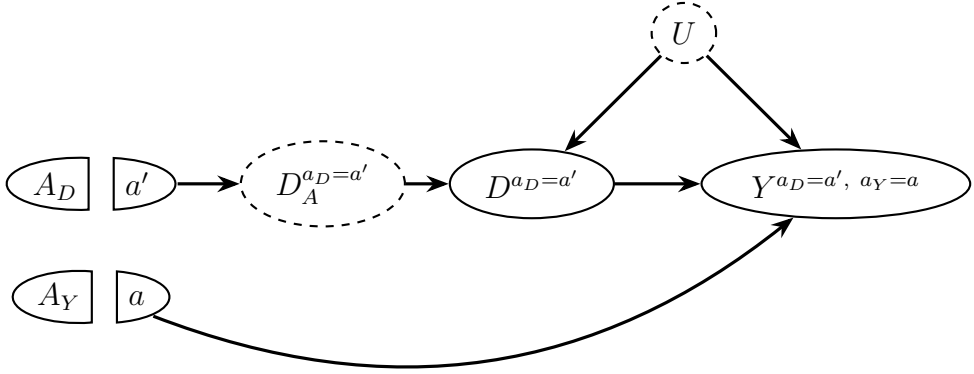
\begin{figure}[!ht] 
\centering
    \begin{tikzpicture}
       \begin{scope}[every node/.style={thick,draw}]
       \node[name=A_D,shape=swig vsplit] at (-5.8,0) {
                                              \nodepart{left}{$A_D$}
                                              \nodepart{right}{$a'$} };
        \node[name=A_Y,shape=swig vsplit] at (-5.8,-1.5) {
                                              \nodepart{left}{$A_Y$}
                                              \nodepart{right}{$a$} };
       \node[name=D, shape=ellipse] at (0,0){$D^{a_D=a'}$};
       \node[name=D_A, shape=ellipse, dashed] at (-2.75,0){$D^{a_D=a'}_A$};
       \node[name=Y, shape=ellipse] at (4,0){$Y^{a_D=a',\ a_Y=a}$};
       \node[name=U, shape=ellipse, dashed] at (2,2){$U$};
    \end{scope}
        
    \begin{scope}[>={Stealth[black]},
                  every node/.style={fill=white,circle},
                  every edge/.style={draw=black,very thick}]
        \path[->] (U) edge (Y);
        \path[->] (U) edge (D);
        \path[->] (A_D) edge (D_A);
        \path[->] (D_A) edge (D);
        \path[->] (D_A) edge[bend right=20] (Y);
        \path[->] (D) edge (Y);
        \path[->] (A_Y) edge[bend right] (Y);
    \end{scope}
    \end{tikzpicture}
    \caption{Single world intervention graph  corresponding to the directed acyclic graph in Figure \ref{fig:dag_augmented_4arm} representing the intervention on the components of treatment $A_D$ and $A_Y$ in the hypothetical four-arm trial. Dashed nodes represent unmeasured variables.}
    \label{fig:swig_augmented_4arm}
\end{figure}

Because the outcome is only defined when $D=0$, a sensible causal effect is the conditional separable effect.

\begin{definition} \label{def:CSE}
    $$CSE(a')=\mathbb{E}(Y^{a_D=a',a_Y=1} \mid D^{a_D=a'}=0 )-\mathbb{E}(Y^{a_D=a',a_Y=0} \mid D^{a_D=a'}=0 ).$$
\end{definition}

The $CSE(a')$ represents the effect of manipulating the component $A_Y$  when keeping the component $A_D$ fixed to the value $a'$, only among individuals who do not experience the post-treatment event under such interventions. In our example, doctors are interested in the effect of plastic surgery on quality of life using a new anesthetic that does not have the dangerous component ($A_D=0,A_Y=1$) against receiving no treatment at all ($A_D=0,A_Y=0)$, among survivors. This contrast corresponds to the $CSE(0)$. The $CSE(0)$ can be estimated in a future experiment where the new anesthetic drug without side effects is actually used\cite{robins2010alternative,stensrud2023conditional}. 

\subsection{Vaccine trial example}
Our assumptions are not always reasonable and need to be justified by subject-matter arguments. Consider, for example, a double-blind RCT, similar to \cite{sadoff2021interim}, that compares the level of antibodies against SARS-CoV-2 at 15 days ($Y$) in seronegative individuals receiving an adenoviral vector‐based COVID-19 vaccine ($A=1$) or placebo ($A=0$) at baseline. Adenoviral vector‐based COVID-19 vaccines can cause fatal vaccine–induced immune thrombotic thrombocytopenia ($D_A=1$)\cite{greinacher2022vaccine, cines2021sars}. Despite being rare, through this side effect, the treatment can increase the probability of death during the study period ($D=1$). Assumptions \ref{ass:positivity} and \ref{ass:D_Asufficient_po} are satisfied by design, and the decomposition in Assumption \ref{ass:treatment_decomposition} is plausible: we just need to consider a modification of the adenoviral-vector platform to remove or neutralize the specific feature responsible for the fatal side effect ($A_D$), while preserving the vaccine component responsible for inducing SARS-CoV-2 antibodies ($A_Y$). 

Now, the $CSE(0)$ corresponds to the benefit of giving a new adenoviral vector-based vaccine, without the dangerous component versus placebo, to those who survive. This is an effect of practical interest, as researchers are investigating the possibility of developing such vaccines\cite{sallard2026novel}.

Our DAG in Figure \ref{fig:dag_augmented_4arm} requires that the fatal side effect is not associated with other causes of SARS-CoV-2 antibodies or other causes of death. This seems to be plausible in a population with similar characteristics (e.g., similar age), because the vaccine–induced immune thrombotic thrombocytopenia is presumed to be caused by the combination of a genetic predisposition and a specific somatic hypermutation\cite{wang2026adenoviral}. Also, according to the DAG, $A_Y$ has no effect on $D$. This is plausible in our 15-day long study, but would not be plausible in a longer study. While the effect of the vaccine active component on mortality is negligible in the first two weeks, it will not be after 15 days due to its protective effect against COVID-19 disease.

\section{Identification}
Unlike existing identification results on separable effects\cite{robins2010alternative,robins2022interventionist,stensrud2023conditional,park2024proximal}, we can identify the conditional separable effect of $A_Y$ on the outcome without measuring $D_A$ or the unmeasured confounders of the $D\text{-}Y$ relationship ($U$).

\begin{proposition} \label{prop:cse_identification}
Under Assumptions \ref{ass:positivity}, \ref{ass:D_Asufficient_po}, \ref{ass:treatment_decomposition}, and \ref{ass:swig_augmented_4arm} we have
$$CSE(0)=CSE(1)=\mathbb{E}(Y \mid A=1, D=0)-\mathbb{E}(Y \mid A=0, D=0)$$
\end{proposition}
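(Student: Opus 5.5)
The plan is to write each of the four terms $\mathbb{E}(Y^{a_D=a',a_Y=a} \mid D^{a_D=a'}=0)$, $a,a'\in\{0,1\}$, as a sum over $u$. The crucial claim is that every such term equals
$$\sum_u \mathbb{E}(Y \mid A=a, D=0, D_A=0, U=u)\,\Pr(U=u \mid D=0, D_A=0),$$
which depends on $a$ but not on $a'$. Once the dependence on $a'$ is gone, $CSE(0)=CSE(1)$ follows, and taking $a'=a$ together with Assumption \ref{ass:treatment_decomposition} gives the observed contrast.

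\emph{Step 1 (outcome given $D=0$).} In the SWIG of Figure \ref{fig:swig_augmented_4arm}, Assumption \ref{ass:D_Asufficient_po} (applied to the four-arm model via Assumption \ref{ass:treatment_decomposition}) gives $\{D^{a_D=a'}=0\}\subseteq\{D_A^{a_D=a'}=0\}$. Hence conditioning on $D^{a_D=a'}=0$ is the same as conditioning on $D^{a_D=a'}=0$ and $D_A^{a_D=a'}=0$. Given $U=u$, $D_A=0$ and $D=0$, the parents of $Y^{a_D=a',a_Y=a}$ are fixed, and $a_D$ enters $Y$ only through $D_A$. The FFRCISTG factorization and modularity then yield
$$\mathbb{E}(Y^{a_D=a',a_Y=a}\mid D^{a_D=a'}=0,U=u)=\mathbb{E}(Y^{a_D=a',a_Y=a}\mid D_A^{a_D=a'}=0,D^{a_D=a'}=0,U=u).$$
By modularity, this is the same as the conditional mean of $Y$ given $(D_A,D,U)=(0,0,u)$ with $a_Y=a$ in any arm. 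It does not depend on $a'$. Denote it $m_a(u)$.

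\emph{Step 2 (distribution of $U$ among survivors).} I need $\Pr(U=u\mid D^{a_D=a'}=0)$ to be free of $a'$. This is Lemma \ref{lemma:useful_factorization}, transported to the four-arm model. Alternatively, I would redo its proof: by Lemma \ref{lemma:multiplicative_survival}'s argument, $\Pr(D^{a_D=a'}=0\mid U=u)=\Pr(D=0\mid U=u,D_A=0)\Pr(D_A^{a_D=a'}=0)$. This holds because $D_A^{a_D=a'}\perp U$ in the SWIG, and because $D$ given $(D_A=0,U)$ does not depend on $a_D$. Bayes' rule then cancels the factor $\Pr(D_A^{a_D=a'}=0)$, which is positive by Assumption \ref{ass:positivity}, leaving
$$\Pr(U=u\mid D^{a_D=a'}=0)=\frac{\Pr(D=0\mid U=u,D_A=0)\Pr(U=u)}{\sum_{v}\Pr(D=0\mid U=v,D_A=0)\Pr(U=v)}=:q(u).$$
This is free of $a'$.

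\emph{Step 3 (assembly).} Combining the first two steps gives $\mathbb{E}(Y^{a_D=a',a_Y=a}\mid D^{a_D=a'}=0)=\sum_u m_a(u)q(u)$ for all $a'$. Therefore $CSE(0)=CSE(1)=\sum_u \{m_1(u)-m_0(u)\}q(u)$. For the identification, take $a'=a$. Then
$$\mathbb{E}(Y^{a_D=a,a_Y=a}\mid D^{a_D=a}=0)=\mathbb{E}(Y^{a=a}\mid D^{a=a}=0)$$
by Assumption \ref{ass:treatment_decomposition}. Since $A$ is randomized (no arrows into $A$), consistency gives $\mathbb{E}(Y^{a=a}\mid D^{a=a}=0)=\mathbb{E}(Y\mid A=a,D=0)$, which is well-defined by positivity. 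The main obstacle is Step 1. One must carefully justify that the conditional law of $Y$ given $(D_A,D,U)=(0,0,u)$ is invariant across the interventions $a_D=0$ and $a_D=1$. This requires modularity in the four-arm FFRCISTG together with the fact that $a_D$ has no path to $Y$ except through $D_A$. A related positivity subtlety arises when $\Pr(D_A^{a_D=a'}=0,D^{a_D=a'}=0,U=u)=0$. Such $u$ have $q(u)=0$ and drop out of the sum.
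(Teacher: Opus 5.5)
Your proposal is correct and follows essentially the same route as the paper's proof. You decompose over $U$ and show that $\Pr(U=u \mid D^{a_D=a'}=0)$ is free of $a'$ by rerunning the multiplicative-survival argument of Lemmas~\ref{lemma:multiplicative_survival}--\ref{lemma:useful_factorization} in the four-arm model (the paper's step $(*)$). You use the determinism to add $D_A^{a_D=a'}=0$ to the conditioning event (step $(**)$), so that FFRCISTG modularity makes the inner outcome regression free of $a'$, and then identify from the diagonal arms via Assumption~\ref{ass:treatment_decomposition}. Your observation that the support of $U$ among survivors is the same for $a'=0,1$, because $\Pr(D_A^{a_D=a'}=0)>0$, is a welcome bit of care about well-definedness that the paper leaves implicit.
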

A formal proof is presented in Appendix \ref{app:proofs}.
Proposition \ref{prop:cse_identification} states that both conditional separable effects correspond to the association between the treatment and the outcome among the individuals who have $D=0$ in the observed data.

This result might seem surprising, as the DAG in Figure \ref{fig:dag_general}, from which our reasoning started, suggests that conditioning on the collider $D$ induces a non-causal association between the treatment and the outcome. However, our story-led assumptions induced a multiplicative survival model (Lemma \ref{lemma:multiplicative_survival}) which has interesting properties when conditioning on survival (see Lemma \ref{lemma:useful_factorization}). In Proposition \ref{prop:cse_identification}, we exploited these properties to identify the conditional separable effects without requiring additional measurements or further assumptions about the unmeasured confounders. To develop an intuition of our identification strategy, consider the DAG in Figure \ref{fig:dag_augmented_4arm}. When we condition on $D=0$, we are also implicitly conditioning on $D_A=0$, because of the deterministic relation between them, and we block all paths between $A_D$ and $Y$. Thus, when conditioning on $D=0$, the association between the treatment and the outcome in the data represents the direct effect of $A_Y$ on $Y$.

Our main interest is conditional separable effects, as we believe they are more attractive, interventionist causal effects in our setting. However, we recognize that the ICH E9(R1) addendum to the guideline on statistical principles for clinical trials\cite{ich2019addendum} has popularized different effects, including principal stratum estimands such as the survivor average causal effect ($SACE$)\cite{robins1986new, rubin2006causal, stensrud2023conditional}:

\begin{definition}
    $SACE=\mathbb{E}(Y^{a=1} \mid D^{a=0}=D^{a=1}=0)-\mathbb{E}(Y^{a=0} \mid D^{a=0}=D^{a=1}=0).$
\end{definition}

The $SACE$ is the effect of treatment among individuals who would not experience the post-treatment event during the study, regardless of which treatment they would receive. In our plastic surgery example, this corresponds to the effect of the surgery among individuals who would not die during the study period regardless of the assigned treatment. In Appendix \ref{app:sace}, we show that, under Assumptions \ref{ass:swig_augmented}, \ref{ass:positivity} and \ref{ass:D_Asufficient_po}, a monotonicity assumption, and a cross-world independence conditional on $U$, the $SACE$ is identified by the functional given in Proposition \ref{prop:cse_identification}, thereby bridging identification results on the $CSE(a')$ and the $SACE$.

Finally, in Appendix \ref{app:proof_identification_L} we give an identification result for the $CSE(a')$ in the setting where a (measured) variable $L$ causes $D_A,D,$ and $Y$, and is affected by $A_D$. This relaxes the assumption that $A_D$ only affects $Y$ through $D_A$, and that $D_A$ and $D,Y$ share no common causes.

\section{Independent mechanisms for adherence}

Consider an RCT including pregnant women who want to stop smoking. This example is inspired by the RCT conducted by Hajek et al.\cite{hajek2022electronic}. Women were randomly assigned to e-cigarettes ($A=0$) or nicotine replacement therapy (NRT, $A=1$). In the e-cigarettes arm, participants were sent e-cigarette starter kits, offered an 8-week supply, and encouraged to tailor the e-cigarette use to their needs\cite{hajek2022electronic}. In the NRT arm, participants were sent a 2-week nicotine patches supply, offered an 8-week supply, and encouraged to also use other NRT products if needed\cite{hajek2022electronic}. The participants received behavioral support through phone calls in both arms. 
Here, we consider self-reported abstinence at 4 weeks as the outcome of interest $Y$, which was one of the secondary outcomes in the original study\cite{hajek2022electronic}. The authors found a negative effect of being randomized to NRT compared to e-cigarettes on $Y$ ($10.7\%$ versus $15.6\%$\cite{hajek2022electronic}).
Let $D$  indicate whether a participant deviated from the assigned treatment ($D=1$) or adhered with the assigned treatment ($D=0$) for the first 4 weeks. Adherence was lower in the NRT arm compared to the e-cigarettes arm. At 4 weeks, $39.9\%$ of participants in the e-cigarettes arm and $22.5\%$ in the NRT arm were currently using the allocated product\cite{hajek2022electronic}. E-cigarettes and NRT have very different adverse reactions. NRT's adverse reactions were primarily skin irritation and nausea; while e-cigarettes' adverse reactions were primarily throat irritation and cough\cite{hajek2022electronic}. Importantly, an inclusion criterion for the study was to be willing to use either NRT or e-cigarettes (``no strong preference''\cite{hajek2022electronic}), and agreeing to use only the allocated stop-smoking product (and not the other) for at least the first 4 weeks\cite{hajek2022electronic}.
Given the lack of preference and the commitment, it seems plausible that the difference in adherence between the arms was only due to the occurrence of events perceived as side effects from the allocated product, or fear of side effects. We indicate with $D_A$ non-adherence in the first 4 weeks due to these reasons. In line with our argument, more participants interrupted product use due to adverse reactions attributable to the product in the NRT arm\cite{hajek2022electronic}.  We assume that $D_A$ is independent of other unmeasured causes of non-adherence or abstinence at 4 weeks. This can be plausible if, for example, strong side effects appear after first uses and their appearance depends only on some predisposition (e.g., skin sensitivity) that is not related to other personal characteristics. In this story, our Assumptions \ref{ass:swig_augmented}, \ref{ass:positivity} and \ref{ass:D_Asufficient_po} are plausible.

Further suppose that $D_A$ is actually only caused by the assigned treatment through skin irritation or fear of skin irritation. This is not implausible, as the side effects of e-cigarettes, such as throat irritation and cough are similar to side effects of regular smoking, and unlikely to affect product use in a population of smokers. Nausea is common among pregnant women in weeks 12-24 of gestation and is also induced by e-cigarettes or regular smoking; therefore, nausea is not likely to be a determining factor for adherence decision caused by treatment assignment. Suppose that the researchers are interested in the effect of an altered nicotine patch that does not irritate the skin. This reflects a question where we decompose the assigned treatment into nicotine patch delivery system ($A_Y=1$) and e-cigarette ($A_Y=0$), and skin irritant ($A_D=1$) or no/neutralized skin irritant ($A_D=0$). These arguments justify the assumptions specific to separable effects (Assumption \ref{ass:treatment_decomposition} and \ref{ass:swig_augmented_4arm}). 

According to Proposition \ref{prop:cse_identification}, the investigators can interpret the association between assigned treatment and outcome, estimated only among adherers ($D=0$) as a valid effect estimate of $CSE(0)$. This association quantifies the effect of assigning a new nicotine patch, with the same nicotine delivery system of the real patch but without skin irritant, versus e-cigarette, among adherers. We want to emphasize that this conclusion is agnostic towards the specific consequences of non-adherence: individuals who don't adhere ($D=1$) can take a different stop-smoking product than the one assigned, can refuse any help, or even drop out of the study.

In this setting, the $SACE$ represents the effect of assigned treatment among compliers ($D^{a=0}=D^{a=1}=0$), an estimand that is often targeted in instrumental variable analyses\cite{angrist1995identification}. We expand on the difference with traditional instrumental variable approaches in Appendix \ref{app:instrumental_link}.

We emphasize that the investigators of the RCT did not believe the association between treatment and outcome among adherers to have a causal interpretation\cite{hajek2022electronic}. This might have been due to their belief that non-adherence was also affected by previous success in reducing smoking\cite{hajek2022electronic}.

\section{The birth weight paradox revisited}

In 1971, Yerushalmy  published an article reporting some empirical findings about the association between having a smoking mother and infant mortality\cite{yerushalmy1971relationship}. These findings were considered  ``paradoxical'' by Yerushalmy\cite{yerushalmy1971relationship}. Specifically, women who smoke have higher risk of having low-birth-weight babies (i.e., weight $\leq 2,500$ grams), compared to women who do not smoke. Overall, the mortality of infants born from mothers who smoke is higher than the mortality of infants born from mothers who do not smoke. Yet, low-birth-weight infants of smoking mothers have lower mortality compared to low-birth-weight infants of non-smoking mothers. This finding could naively be interpreted as a desirable effect of smoking in low-birth-weight babies\cite{wilcox2001importance, hernandez2006birth}, and became  known as the ``birth weight paradox''\cite{wilcox2001importance, hernandez2006birth, pearl2018book, schisterman2009z, whitcomb2009quantification, vanderweele2014commentary}. Similar associations have been reported in other studies and also for different exposures than smoking\cite{wilcox2001importance, schisterman2009z, whitcomb2009quantification, vanderweele2014commentary}. The paradox puzzled epidemiologists for decades\cite{joseph2004parsimonious, pearl2018book, whitcomb2009quantification} and no intuitive explanation was presented until the first decade of the twenty-first century\cite{joseph2004parsimonious, pearl2018book}. In 2006, Hern{\'a}ndez-D{\'\i}az et al.\cite{hernandez2006birth} argued that the paradox could be explained by collider stratification bias. Since then, the collider bias explanation gained traction and stimulated more research on the topic\cite{schisterman2009z, whitcomb2009quantification, vanderweele2014commentary}.

Nevertheless, there are still some open questions regarding the birth weight paradox. First, Wilcox\cite{wilcox2006invited} pointed out that the paradox is only seen in low-birth-weight babies. The conditional association between smoking and infant mortality agrees with the biological understanding (smoking is harmful) among high-birth-weight babies.
Second, several authors, including Hern{\'a}ndez-D{\'\i}az et al., have argued that one purpose of conditioning on birth weight was to isolate a direct effect of smoking on infant mortality not mediated by birth weight\cite{schisterman2009z, whitcomb2009quantification, vanderweele2014commentary, hernandez2006birth}. Yet, they have cautioned against the interpretation of such analyses in practice, precisely because the presence of unmeasured common causes of birth weight and mortality can bias estimators of the direct effect\cite{schisterman2009z,vanderweele2014commentary,hernandez2006birth}.

Our work contributes to this debate. Both Hern{\'a}ndez-D{\'\i}az et al. (Figure 3.7 in \cite{hernandez2006birth}) and VanderWeele (Figure 2b-c in \cite{vanderweele2014commentary}) proposed DAGs almost identical to our Figure \ref{fig:dag_augmented} to describe the data generating mechanism behind the birth weight paradox. Specifically, the proposed DAGs can be written as special cases of the DAG in our Figure \ref{fig:dag_augmented} where $A$ is maternal smoking, $D_A$ is the low-birth-weight status that can be affected by smoking (``low birthweight from the predominant distribution'' or ``low birthweight due to smoking''\cite{vanderweele2014commentary}), $D$ is low-birth-weight status due to any reason, $Y$ is infant mortality, and $U$ represents unmeasured common causes of $D$ and $Y$ (e.g., ``malnutrition
or birth defects''\cite{vanderweele2014commentary}). The DAG allows maternal smoking ($A$) to induce a type of low-birth-weight status ($D_A$), while other unmeasured variables ($U$), that have an effect on mortality ($Y$), can also affect low birth weight ($D$). As Vanderweele\cite{vanderweele2014commentary} pointed out, Figure \ref{fig:dag_augmented} implies that $D_A$ is not associated with other unmeasured causes of $D$ or $Y$. Under this causal model, our Assumptions \ref{ass:swig_augmented}, \ref{ass:positivity} and \ref{ass:D_Asufficient_po} hold true. In particular, the determinism between $D_A$ and $D$ is implicit in the variables definition.

This means that, if the proposed causal model is true, the association among high-birth-weight babies ($D=0$) represents a valid estimate of the $CSE(a')$ if the treatment ``maternal smoking'' is decomposed into different components $A_D$ and $A_Y$ (Assumption \ref{ass:treatment_decomposition} and \ref{ass:swig_augmented_4arm}). 
However, it is unclear whether treatment decompositions are plausible to conceive for maternal smoking, as the same mechanisms proposed for the effect of smoking on birth weight\cite{ananth2004reexamining} also might affect mortality. If such decompositions are not of interest, questions about the direct effect of maternal smoking on infant mortality might be of little public health relevance. Under different assumptions, the smoking-mortality association among high-birth-weight infants may be viewed as a valid estimate of the $SACE$, as we show in Appendix \ref{app:sace}. However, the $SACE$ requires arguably strong justifications to be practically relevant, which are not clear to us in this example.

Our results show that it is possible to estimate a direct effect of maternal smoking on infant mortality, the $CSE(a')$, even when unmeasured confounders of birth weight and mortality exist. This is compatible with the observation that the birth weight paradox does not exist among high-birth-weight babies\cite{wilcox2006invited}: under our model, the association in high-birth-weight babies is a valid estimate of an effect of maternal smoking on mortality. Our results rely on a causal model already invoked in previous descriptions of the birth weight paradox. In this work, we take the causal model as given and do not examine whether it is plausible or compatible with the observed data. 

\section{Conclusion}
Under specific causal models, the association between treatment and outcome conditional on not having the post-treatment event represents a valid estimator of both conditional separable effects and survivor average causal effects, even when unmeasured common causes of the post-treatment event and the outcome exist. What we require is that the treatment and other unmeasured causes of the outcome generate the post-treatment event through ``independent mechanisms''. While this assumption needs to be evaluated on a case-by-case basis, it can be justified in settings where strong knowledge of biological mechanisms exists.

\section*{Acknowledgments}
Marco Piccininni and Mats J. Stensrud were supported by the Swiss National Science Foundation (SNSF Starting Grants, Grant number: 211550). The authors would like to thank the Isaac Newton Institute for Mathematical Sciences, Cambridge, for support and hospitality during the programme Causal inference: From theory to practice and back again, where work on this paper was undertaken. This work was supported by EPSRC grant EP/Z000580/1.

During the preparation of this manuscript the authors used ChatGPT in order to assist with formatting, minor language editing, and error checking. The authors reviewed and edited the content as needed and take full responsibility for the content of the publication.

\bibliographystyle{unsrt.bst}
\bibliography{references}

@article{cottone2023estimand,
  title={The estimand framework had implications in time to patient-reported outcomes deterioration analyses in cancer clinical trials},
  author={Cottone, Francesco and Efficace, Fabio and Cella, David and Aaronson, Neil K and Giesinger, Johannes M and Bachet, Jean-Baptiste and Louvet, Christophe and Charton, Emilie and Collins, Gary S and Anota, Amelie},
  journal={Journal of Clinical Epidemiology},
  volume={162},
  pages={118--126},
  year={2023},
  publisher={Elsevier}
}

@article{sadoff2021interim,
  title={Interim results of a phase 1--2a trial of Ad26. COV2. S Covid-19 vaccine},
  author={Sadoff, Jerald and Le Gars, Mathieu and Shukarev, Georgi and Heerwegh, Dirk and Truyers, Carla and de Groot, Anne M and Stoop, Jeroen and Tete, Sarah and Van Damme, Wim and Leroux-Roels, Isabel and others},
  journal={New England Journal of Medicine},
  volume={384},
  number={19},
  pages={1824--1835},
  year={2021},
  publisher={Mass Medical Soc}
}

@book{hernan2010causal,
  title={Causal Inference: What If},
  author={Hern{\'a}n, Miguel A and Robins, James M},
  year={2020},
  publisher={Boca Raton: Chapman \& Hall/CRC}
}

@article{hernan2004structural,
  title={A structural approach to selection bias},
  author={Hern{\'a}n, Miguel A and Hern{\'a}ndez-D{\'\i}az, Sonia and Robins, James M},
  journal={Epidemiology},
  volume={15},
  number={5},
  pages={615--625},
  year={2004},
  publisher={LWW}
}

@article{cines2021sars,
  title={SARS-CoV-2 vaccine--induced immune thrombotic thrombocytopenia},
  author={Cines, Douglas B and Bussel, James B},
  journal={New England Journal of Medicine},
  volume={384},
  number={23},
  pages={2254--2256},
  year={2021},
  publisher={Mass Medical Soc}
}

@article{greinacher2022vaccine,
  title={Vaccine-induced immune thrombotic thrombocytopenia (VITT): update on diagnosis and management considering different resources},
  author={Greinacher, Andreas and Langer, Florian and Makris, Mike and Pai, Menaka and Pavord, Sue and Tran, Huyen and Warkentin, Theodore E},
  journal={Journal of Thrombosis and Haemostasis},
  volume={20},
  number={1},
  pages={149--156},
  year={2022},
  publisher={Wiley Online Library}
}

@inproceedings{malinsky2019potential,
  title={A potential outcomes calculus for identifying conditional path-specific effects},
  author={Malinsky, Daniel and Shpitser, Ilya and Richardson, Thomas},
  booktitle={The 22nd International Conference on Artificial Intelligence and Statistics},
  pages={3080--3088},
  year={2019},
  organization={PMLR}
}

@article{holmberg2022collider,
  title={Collider bias},
  author={Holmberg, Mathias J and Andersen, Lars W},
  journal={Jama},
  volume={327},
  number={13},
  pages={1282--1283},
  year={2022}
}

@article{banack2024collider,
  title={Collider stratification bias I: principles and structure},
  author={Banack, Hailey R and Mayeda, Elizabeth Rose and Naimi, Ashley I and Fox, Matthew P and Whitcomb, Brian W},
  journal={American journal of epidemiology},
  volume={193},
  number={2},
  pages={238--240},
  year={2024},
  publisher={Oxford University Press}
}

@article{schneiderbanger2014management,
  title={Management of malignant hyperthermia: diagnosis and treatment},
  author={Schneiderbanger, Daniel and Johannsen, Stephan and Roewer, Norbert and Schuster, Frank},
  journal={Therapeutics and clinical risk management},
  pages={355--362},
  year={2014},
  publisher={Taylor \& Francis}
}

@article{hajek2022electronic,
  title={Electronic cigarettes versus nicotine patches for smoking cessation in pregnancy: a randomized controlled trial},
  author={Hajek, Peter and Przulj, Dunja and Pesola, Francesca and Griffiths, Chris and Walton, Robert and McRobbie, Hayden and Coleman, Tim and Lewis, Sarah and Whitemore, Rachel and Clark, Miranda and others},
  journal={Nature medicine},
  volume={28},
  number={5},
  pages={958--964},
  year={2022},
  publisher={Nature Publishing Group US New York}
}

@article{sallard2026novel,
  title={Novel adenovirus vaccine vectors lacking thrombosis-associated interactions with platelet factor 4},
  author={Sallard, Erwan and Pembaur, Daniel and Ciancaglini, Matias and Manov-Bouard, Lucie and Weklak, Denice and Hamdan, Firas and Chan, Chun Kit and J{\"o}nsson, Franziska and Chabot, Elise and Musielak, Carmen and others},
  journal={iScience},
  volume={29},
  number={1},
  year={2026},
  publisher={Elsevier}
}

@misc{angrist1995identification,
  title={Identification and estimation of local average treatment effects},
  author={Angrist, Joshua and Imbens, Guido},
  year={1995},
  publisher={National Bureau of Economic Research Cambridge, Mass., USA}
}

@article{richardson2013single,
  title={Single world intervention graphs (SWIGs): A unification of the counterfactual and graphical approaches to causality},
  author={Richardson, Thomas S and Robins, James M},
  journal={Center for the Statistics and the Social Sciences, University of Washington Series. Working Paper},
  volume={128},
  number={30},
  pages={2013},
  year={2013},
  publisher={Citeseer}
}

@article{young2021identified,
  title={Identified versus interesting causal effects in fertility trials and other settings with competing or truncation events},
  author={Young, Jessica G and Stensrud, Mats J},
  journal={Epidemiology},
  volume={32},
  number={4},
  pages={569--572},
  year={2021},
  publisher={LWW}
}

@article{stensrud2022translating,
  title={Translating questions to estimands in randomized clinical trials with intercurrent events},
  author={Stensrud, Mats J and Dukes, Oliver},
  journal={Statistics in Medicine},
  volume={41},
  number={16},
  pages={3211--3228},
  year={2022},
  publisher={Wiley Online Library}
}

@techreport{ich2019addendum,
  author       = {{ICH}},
  title        = {{ICH E9(R1): Addendum on Estimands and Sensitivity Analysis in Clinical Trials to the Guideline on Statistical Principles for Clinical Trials}},
  institution  = {ICH},
  year         = {2019}}

@article{rubin2006causal,
  title={Causal inference through potential outcomes and principal stratification: application to studies with" censoring" due to death},
  author={Rubin, Donald B},
  journal={Statistical Science},
  pages={299--309},
  year={2006},
  publisher={JSTOR}
}

@inbook{robins2022interventionist,
author = {Robins, James M. and Richardson, Thomas S. and Shpitser, Ilya},
title = {An Interventionist Approach to Mediation Analysis},
year = {2022},
isbn = {9781450395861},
publisher = {Association for Computing Machinery},
address = {New York, NY, USA},
edition = {1},
url = {https://doi.org/10.1145/3501714.3501754},
booktitle = {Probabilistic and Causal Inference: The Works of Judea Pearl},
pages = {713–764},
numpages = {52}
}

@article{robins2010alternative,
  title={Alternative graphical causal models and the identification of direct effects},
  author={Robins, James M and Richardson, Thomas S},
  journal={Causality and Psychopathology: Finding the Determinants of Disorders and their Cures},
  volume={84},
  pages={103--158},
  year={2010},
  publisher={Oxford University Press Oxford, UK}
}

@article{robins1992identifiability,
  title={Identifiability and exchangeability for direct and indirect effects},
  author={Robins, James M and Greenland, Sander},
  journal={Epidemiology},
  volume={3},
  number={2},
  pages={143--155},
  year={1992},
  publisher={LWW}
}

@article{park2024proximal,
  title={Proximal causal inference for conditional separable effects},
  author={Park, Chan and Stensrud, Mats and Tchetgen, Eric Tchetgen},
  journal={arXiv preprint arXiv:2402.11020},
  year={2024}
}

@article{robins1986new,
  title={A new approach to causal inference in mortality studies with a sustained exposure period—application to control of the healthy worker survivor effect},
  author={Robins, James},
  journal={Mathematical Modelling},
  volume={7},
  number={9-12},
  pages={1393--1512},
  year={1986},
  publisher={Elsevier}
}

@article{stensrud2021generalized,
  title={A generalized theory of separable effects in competing event settings},
  author={Stensrud, Mats J and Hern{\'a}n, Miguel A and Tchetgen Tchetgen, Eric J and Robins, James M and Didelez, Vanessa and Young, Jessica G},
  journal={Lifetime Data Analysis},
  volume={27},
  number={4},
  pages={588--631},
  year={2021},
  publisher={Springer}
}

@article{stensrud2022separable,
  title={Separable effects for causal inference in the presence of competing events},
  author={Stensrud, Mats J and Young, Jessica G and Didelez, Vanessa and Robins, James M and Hern{\'a}n, Miguel A},
  journal={Journal of the American Statistical Association},
  volume={117},
  number={537},
  pages={175--183},
  year={2022},
  publisher={Taylor \& Francis}
}

@article{stensrud2023conditional,
  title={Conditional separable effects},
  author={Stensrud, Mats J and Robins, James M and Sarvet, Aaron and Tchetgen Tchetgen, Eric J and Young, Jessica G},
  journal={Journal of the American Statistical Association},
  volume={118},
  number={544},
  pages={2671--2683},
  year={2023},
  publisher={Taylor \& Francis}
}

@article{wang2017identification,
  title={Identification and estimation of causal effects with outcomes truncated by death},
  author={Wang, Linbo and Zhou, Xiao-Hua and Richardson, Thomas S},
  journal={Biometrika},
  volume={104},
  number={3},
  pages={597--612},
  year={2017},
  publisher={Oxford University Press}
}

@article{wang2026adenoviral,
  title={Adenoviral Inciting Antigen and Somatic Hypermutation in VITT},
  author={Wang, Jing Jing and Sch{\"o}nborn, Linda and Warkentin, Theodore E and M{\"u}ller, Luisa and Thiele, Thomas and Ulm, Lena and V{\"o}lker, Uwe and Ameling, Sabine and Franzenburg, S{\"o}ren and Kaderali, Lars and others},
  journal={New England Journal of Medicine},
  volume={394},
  number={7},
  pages={669--683},
  year={2026},
  publisher={Mass Medical Soc}
}

@article{yerushalmy1971relationship,
  title={THE RELATIONSHIP OF PARENTS CIGARETTE SMOKING TO OUTCOME OF PREGNANCY--IMPLICATIONS AS TO THE PROBLEM OF INFERRING CAUSATION FROM OBSERVED ASSOCIATIONS},
  author={Yerushalmy, J},
  journal={American Journal of Epidemiology},
  volume={93},
  number={6},
  pages={443--443},
  year={1971},
  publisher={Oxford University Press}
}

@article{wilcox2001importance,
  title={On the importance—and the unimportance—of birthweight},
  author={Wilcox, Allen J},
  journal={International journal of epidemiology},
  volume={30},
  number={6},
  pages={1233--1241},
  year={2001},
  publisher={Oxford University Press}
}

@article{joseph2004parsimonious,
  title={A parsimonious explanation for intersecting perinatal mortality curves: understanding the effects of race and of maternal smoking},
  author={Joseph, KS and Demissie, Kitaw and Platt, Robert W and Ananth, Cande V and McCarthy, Brian J and Kramer, Michael S},
  journal={BMC Pregnancy and Childbirth},
  volume={4},
  number={1},
  pages={7},
  year={2004},
  publisher={Springer}
}

@article{hernandez2006birth,
  title={The birth weight “paradox” uncovered?},
  author={Hern{\'a}ndez-D{\'\i}az, Sonia and Schisterman, Enrique F and Hern{\'a}n, Miguel A},
  journal={American journal of epidemiology},
  volume={164},
  number={11},
  pages={1115--1120},
  year={2006},
  publisher={Oxford University Press}
}

@article{ananth2004reexamining,
  title={Reexamining the effects of gestational age, fetal growth, and maternal smoking on neonatal mortality},
  author={Ananth, Cande V and Platt, Robert W},
  journal={BMC pregnancy and childbirth},
  volume={4},
  number={1},
  pages={22},
  year={2004},
  publisher={Springer}
}

@book{pearl2018book,
  title={The book of why: the new science of cause and effect},
  author={Pearl, Judea and Mackenzie, Dana},
  year={2018},
  publisher={Basic books}
}

@article{schisterman2009z,
  title={Z-scores and the birthweight paradox},
  author={Schisterman, Enrique F and Whitcomb, Brian W and Mumford, Sunni L and Platt, Robert W},
  journal={Paediatric and Perinatal Epidemiology},
  volume={23},
  number={5},
  pages={403--413},
  year={2009},
  publisher={Wiley Online Library}
}

@article{whitcomb2009quantification,
  title={Quantification of collider-stratification bias and the birthweight paradox},
  author={Whitcomb, Brian W and Schisterman, Enrique F and Perkins, Neil J and Platt, Robert W},
  journal={Paediatric and perinatal epidemiology},
  volume={23},
  number={5},
  pages={394--402},
  year={2009},
  publisher={Wiley Online Library}
}

@article{vanderweele2014commentary,
  title={Commentary: Resolutions of the birthweight paradox: competing explanations and analytical insights},
  author={VanderWeele, Tyler J},
  journal={International journal of epidemiology},
  volume={43},
  number={5},
  pages={1368--1373},
  year={2014},
  publisher={Oxford University Press}
}

@article{wilcox2006invited,
  title={Invited commentary: the perils of birth weight—a lesson from directed acyclic graphs},
  author={Wilcox, Allen J},
  journal={American Journal of Epidemiology},
  volume={164},
  number={11},
  pages={1121--1123},
  year={2006},
  publisher={Oxford University Press}
}

\newpage

\begin{appendices}
\setcounter{page}{1}
\begin{center}
\textbf{APPENDIX - EFFECTS CONDITIONAL ON POST-TREATMENT EVENTS GENERATED BY INDEPENDENT MECHANISMS}

MARCO PICCININNI, MATS J. STENSRUD

\end{center}

\newpage

\section{Proofs of main results} \label{app:proofs}

\begin{proof}[Proof of Lemma \ref{lemma:multiplicative_survival}]
\begin{align*}
&\Pr(D^{a=a'}=0 \mid U=u)\\
&= \Pr(D^{a=a'}=0 \mid U=u, D^{a=a'}_A=0) \Pr(D^{a=a'}_A=0 \mid U=u)\\
&\quad + \Pr(D^{a=a'}=0 \mid U=u, D^{a=a'}_A=1) \Pr(D^{a=a'}_A=1 \mid U=u)\\
&\overset{(\text{A\ref{ass:D_Asufficient_po}})}{=} \Pr(D^{a=a'}=0 \mid U=u, D^{a=a'}_A=0) \Pr(D^{a=a'}_A=0 \mid U=u)\\
&\overset{(\text{A\ref{ass:swig_augmented}})}{=} \Pr(D=0 \mid U=u, D_A=0) \Pr(D_A=0 \mid A=a').
\end{align*}
Assumption \ref{ass:positivity} ensures that the conditional quantities are well-defined.
\end{proof}

\begin{proof}[Proof of Lemma \ref{lemma:useful_factorization}]
For every $a' \in \{0,1\}$,
\begin{align*}
&\Pr(U=u \mid D^{a=a'}=0)=\frac{\Pr(D^{a=a'}=0 \mid U=u) \Pr(U=u)}{\sum_v \Pr(D^{a=a'}=0 \mid U=v) \Pr(U=v)}\\
&\overset{(\text{L\ref{lemma:multiplicative_survival}})}{=}\frac{\Pr(D=0 \mid U=u, D_A=0) \Pr(D_A=0 \mid A=a') \Pr(U=u)}{\sum_v \Pr(D=0 \mid U=v, D_A=0) \Pr(D_A=0 \mid A=a') \Pr(U=v)}\\
&=\frac{\Pr(D=0 \mid U=u, D_A=0) \Pr(U=u)}{\sum_v \Pr(D=0 \mid U=v, D_A=0) \Pr(U=v)},
\end{align*}    
where the final term is not a function of $a'$. Assumption \ref{ass:positivity} ensures that the conditional quantities are well-defined.
\end{proof}

\begin{proof}[Proof of Proposition \ref{prop:cse_identification}]
First note that in the four-arm trial, under Assumptions \ref{ass:D_Asufficient_po}, \ref{ass:treatment_decomposition}, and \ref{ass:swig_augmented_4arm}, we have that
\begin{align}\label{eq:lemma2_sepeff}
\Pr(U=u \mid D^{a_D=0,a_Y=0}=0)=\Pr(U=u \mid D^{a_D=1,a_Y=1}=0). \tag{*}   
\end{align}
This can be proven by following the same steps of the proofs for Lemma \ref{lemma:multiplicative_survival} and Lemma \ref{lemma:useful_factorization} but replacing interventions or conditioning on $A=a'$ with interventions or conditioning on $A_D=A_Y=a'$ (Assumption \ref{ass:treatment_decomposition}) and invoking Assumption \ref{ass:swig_augmented_4arm} instead of Assumption \ref{ass:swig_augmented}.

Then, for every $a'$,
\begin{align}\label{eq:determinism_D_DA}
&D^{a_D=a'}=0 \overset{(\text{A\ref{ass:swig_augmented_4arm}})}{\iff} D^{a_D=a',a_Y=a'}=0 \overset{(\text{A\ref{ass:treatment_decomposition}})}{\iff} D^{a=a'}=0 \notag \\
&\overset{(\text{A\ref{ass:D_Asufficient_po}})}{\implies} D^{a=a'}_A=0 \overset{(\text{A\ref{ass:treatment_decomposition}})}{\iff} D^{a_D=a', a_Y=a'}_A=0 \overset{(\text{A\ref{ass:swig_augmented_4arm}})}{\iff} D^{a_D=a'}_A=0. \tag{**}
\end{align}
For any $a,a' \in \{0,1\}$, the quantity
\begin{align*}
&\mathbb{E}(Y^{a_D=a', a_Y=a} \mid D^{a_D=a'}=0)\\
&=\sum_u \Bigl\{ \mathbb{E}(Y^{a_D=a', a_Y=a} \mid D^{a_D=a'}=0, U=u)\Pr(U=u \mid D^{a_D=a'}=0)\Bigr\}\\
&\overset{(\text{A\ref{ass:swig_augmented_4arm}})}{=}\sum_u \Bigl\{ \mathbb{E}(Y^{a_D=a', a_Y=a} \mid D^{a_D=a'}=0, U=u)\Pr(U=u \mid D^{a_D=a',a_Y=a'}=0)\Bigr\}\\
&\overset{(\text{\ref{eq:lemma2_sepeff}})}{=}\sum_u \Bigl\{ \mathbb{E}(Y^{a_D=a', a_Y=a} \mid D^{a_D=a'}=0, U=u)\Pr(U=u \mid D^{a_D=1,a_Y=1}=0)\Bigr\}\\
&\overset{(\text{\ref{eq:determinism_D_DA}})}{=}\sum_u \Bigl\{ \mathbb{E}(Y^{a_D=a', a_Y=a} \mid D^{a_D=a'}=0, D^{a_D=a'}_A=0, U=u)\Pr(U=u \mid D^{a_D=1,a_Y=1}=0)\Bigr\}\\
&\overset{(\text{A\ref{ass:swig_augmented_4arm}})}{=}\sum_u \Bigl\{ \mathbb{E}(Y \mid D=0, D_A=0, U=u, A_Y=a)\Pr(U=u \mid D^{a_D=1,a_Y=1}=0)\Bigr\}
\end{align*}
is not a function of $a'$. From which follows that
\begin{align*}
\mathbb{E}(Y^{a_D=1, a_Y=a} \mid D^{a_D=1}=0)=
\mathbb{E}(Y^{a_D=0, a_Y=a} \mid D^{a_D=0}=0).
\end{align*}
Thus, with data from only the two arms $A_D=A_Y=1$ and $A_D=A_Y=0$, we can identify the expectation of the outcome conditional on $D=0$ in all arms of the hypothetical four-arm trial. That is,
\begin{align*}
&\mathbb{E}(Y^{a_D=0, a_Y=1} \mid D^{a_D=0}=0) =
\mathbb{E}(Y^{a_D=1, a_Y=1} \mid D^{a_D=1}=0)\\
&\overset{(\text{A\ref{ass:swig_augmented_4arm}})}{=}
\mathbb{E}(Y \mid A_D=1, A_Y=1, D=0)\overset{(\text{A\ref{ass:treatment_decomposition}})}{=} \mathbb{E}(Y \mid A=1, D=0),
\end{align*}
and
\begin{align*}
&\mathbb{E}(Y^{a_D=1, a_Y=0} \mid D^{a_D=1}=0) =
\mathbb{E}(Y^{a_D=0, a_Y=0} \mid D^{a_D=0}=0)\\
&\overset{(\text{A\ref{ass:swig_augmented_4arm}})}{=}
\mathbb{E}(Y \mid A_D=0, A_Y=0, D=0)\overset{(\text{A\ref{ass:treatment_decomposition}})}{=} \mathbb{E}(Y \mid A=0, D=0).
\end{align*}  
From which the result in Proposition \ref{prop:cse_identification} follows immediately. Assumption \ref{ass:positivity} ensures that the conditional quantities are well-defined.
\end{proof}

\newpage

\section{Sufficient-component cause model} \label{app:causal_pies}
Here we give a simple, illustrative example of a sufficient-component cause model\cite{hernan2010causal}. This is meant to give readers a conceptual understanding of independent mechanisms. Each ``causal pie'' represents a sufficient cause for the post-treatment event, which is binary. Consider $\delta_0, \delta_1, \delta_2, \delta_3, \delta_4, \delta_5, A, U \in \{0,1\}$ to be mutually independent binary variables. As in the main text, $A$ represents the randomized treatment ($A=1$ treatment of interest, $A=0$ control), while $U$ represents all unmeasured causes of $D$ that are also associated with $Y$. Here, for simplicity we assume $U$ to be binary. The $\delta$-s represent some exogenous (independent ``noise'') variables.

Consider the following causal pies representing all possible sufficient causes for the realization of the event $D=1$.

\begin{figure}[htbp]
  \centering
  \includegraphics[width=0.7\textwidth]{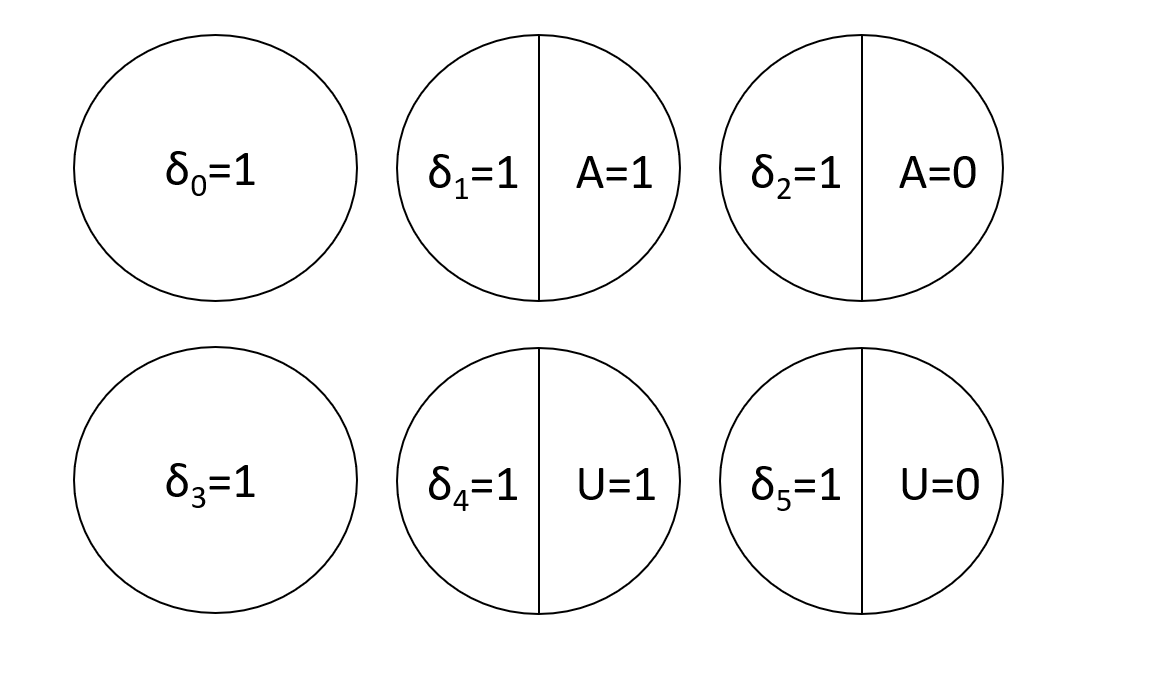}
  \caption*{}
\end{figure}

Define $D_A$ to be a binary variable that is equal to 1 when at least one of the first row of causal pies hold true, that is, $$D_A:=(\delta_0=1) \vee (\delta_1=1 \wedge A=1) \vee (\delta_2=1 \wedge A=0),$$
where $\vee$ and $\wedge$ indicate the OR and AND logical operators, respectively. 

The variable $D_A$ is a sufficient cause for $D$ (the realization of $D_A=1$ ensures that $D=1$), fully mediates the effect of $A$ on $D$ (no other pie includes $A$ outside the ones concurring to $D_A$), and is independent of other unmeasured causes of $D$ (because $D_A$ is a deterministic function of variables jointly independent of $U,\delta_3,\delta_4,\delta_5$). This is analogous to the assumptions of the causal model in Figure \ref{fig:dag_augmented} and the determinism in Assumption \ref{ass:D_Asufficient_po}.

The set of causal pies we are considering ensures that $D=1$ happens through two distinct separate mechanisms: (1) $D_A$ is equal to 1, and (2) either $\delta_3=1$ or certain combinations of $U,\delta_4,\delta_5$ occur. There is no causal pie that contains both $A$ and $U$. The variables $A$ and $U$ do not interact, in the ``sufficient cause interaction'' sense\cite{hernan2010causal}, in generating $D=1$.

Under this set of causal pies, it is easy to show that the multiplicative survival model holds:

\begin{align*}
&\Pr(D^{a=a'}=0 \mid U=u)=\Pr(D=0 \mid A=a', U=u )\\
&=\begin{cases}
\Pr(\neg \left((D_A=1) \vee (\delta_3=1) \vee (\delta_4=1)\right) \mid A=a') & \text{if $u=1$}\\
\Pr(\neg \left((D_A=1) \vee (\delta_3=1) \vee (\delta_5=1)\right) \mid A=a') & \text{if $u=0$}
\end{cases}\\
&=\begin{cases}
\Pr((D_A=0) \wedge (\delta_3=0) \wedge (\delta_4=0) \mid A=a') & \text{if $u=1$}\\
\Pr( (D_A=0) \wedge (\delta_3=0) \wedge (\delta_5=0) \mid A=a') & \text{if $u=0$}
\end{cases}\\
&=\begin{cases}
\Pr(D_A=0 \mid A=a') \Pr((\delta_3=0) \wedge (\delta_4=0) \mid A=a') & \text{if $u=1$}\\
\Pr(D_A=0 \mid A=a') \Pr((\delta_3=0) \wedge (\delta_5=0) \mid A=a') & \text{if $u=0$}\\
\end{cases}\\
&=\begin{cases}
\Pr(D_A=0 \mid A=a')\Pr(D=0 \mid U=u, D_A=0) & \text{if $u=1$}\\
\Pr(D_A=0 \mid A=a') \Pr(D=0 \mid U=u, D_A=0) & \text{if $u=0$}.\\
\end{cases}
\end{align*}

\newpage

\section{Survivor average causal effect} \label{app:sace}

Consider the following monotonicity assumption:
\begin{assumption} \label{ass:monotonicityD_A}
$D_A^{a=1} \geq D_A^{a=0}.$    
\end{assumption}
In our plastic surgery example, Assumption \ref{ass:monotonicityD_A} can be interpreted as follows: no individual would die of $D_A$ when untreated but not die of $D_A$ when treated. This monotonicity assumption on $D_A$, immediately induces a monotonicity of treatment effects on $D$ as well, given the determinism and the exclusion restrictions in our causal model.

\begin{lemma} \label{lemma:monotonicity_D}
Under Assumptions \ref{ass:swig_augmented}, \ref{ass:D_Asufficient_po} and \ref{ass:monotonicityD_A}, we have that $$D^{a=0}=D^{a=1}=0 \iff D^{a=1}=0.$$
\end{lemma}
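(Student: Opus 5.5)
The direction ``$\Rightarrow$'' is trivial, since $D^{a=0}=D^{a=1}=0$ in particular gives $D^{a=1}=0$. The work lies entirely in ``$\Leftarrow$'': we must show that $D^{a=1}=0$ forces $D^{a=0}=0$. The plan is to write $D^{a=a'}$ as a function of its parents in the SWIG of Figure \ref{fig:swig_augmented} and exploit the fact that $A$ enters $D$ only through $D_A$.

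First, under the FFRCISTG model of Assumption \ref{ass:swig_augmented}, the graph in Figure \ref{fig:dag_augmented} has no edge $A\to D$. Hence $D^{a=a'} = f_D(D_A^{a=a'}, U, \varepsilon_D)$ for a structural function $f_D$ and unit-level exogenous inputs that do not depend on $a'$. Equivalently, in FFRCISTG language, $D^{a=a'} = D^{d_A = D_A^{a=a'}}$, where $D^{d_A}$ is the one-step counterfactual of $D$ under setting its parent $D_A$. This exclusion restriction, applied at the individual level, is what links the two worlds $a'=0$ and $a'=1$ for the same unit, and it is the step I expect to need the most care in phrasing.

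Second, suppose $D^{a=1}=0$. By Assumption \ref{ass:D_Asufficient_po} (in contrapositive form), $D_A^{a=1}=0$. By the monotonicity Assumption \ref{ass:monotonicityD_A}, $D_A^{a=0}\le D_A^{a=1}=0$, so $D_A^{a=0}=0$. Therefore $D_A^{a=0}=D_A^{a=1}=0$. By the first step,
\[
D^{a=0}=D^{d_A=D_A^{a=0}}=D^{d_A=0}=D^{d_A=D_A^{a=1}}=D^{a=1}=0,
\]
which gives $D^{a=0}=D^{a=1}=0$ and completes the proof.
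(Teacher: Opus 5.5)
Your proposal is correct and follows essentially the same route as the paper. You pass from $D^{a=1}=0$ to $D_A^{a=1}=0$ via the determinism assumption and then to $D_A^{a=0}=0$ via monotonicity. You then use the FFRCISTG recursive-substitution and exclusion identity $D^{a=a'}=D^{d_A=D_A^{a=a'}}$ to conclude $D^{a=0}=D^{a=1}$, which is exactly the paper's argument.
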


\begin{proof}[Proof of Lemma \ref{lemma:monotonicity_D}]
The direction $D^{a=0}=D^{a=1}=0 \implies D^{a=1}=0$ holds trivially. To show that the converse also holds, we have to show that $D^{a=1}=0 \implies D^{a=0}=0$. We will use that $D^{a=1}=0 \overset{(\text{A\ref{ass:D_Asufficient_po}})}{\implies} D^{a=1}_A=0 \overset{(\text{A\ref{ass:monotonicityD_A}})}{\implies} D^{a=0}_A=0$. Moreover, under Assumption \ref{ass:swig_augmented}, $D^{a=a'}=D^{a=a',d_A=D^{a=a'}_A}=D^{d_A=D^{a=a'}_A}$, therefore units with $D^{a=1}_A=D^{a=0}_A$ also have $D^{a=1}=D^{a=0}$. This proves that $D^{a=1}=0 \implies D^{a=0}=0$.
\end{proof}

Consider also an independence assumption conditional on $U$, which might be uncontroversial in a Non-Parametric Structural Equation Model with independent errors\cite{pearl2018book}, but not under a FFRCISTG model that does not invoke cross-world assumptions\cite{robins1986new,richardson2013single}: 

\begin{assumption} \label{ass:crossworld_ind}
$Y^{a=0} \independent D^{a=1} \mid D^{a=0}=0,U.$    
\end{assumption}
Under these additional assumptions, the $SACE$ can be identified with the same functional of Proposition \ref{prop:cse_identification}.

\begin{proposition}\label{prop:sace_identification}
Under Assumptions \ref{ass:swig_augmented}, \ref{ass:positivity}, \ref{ass:D_Asufficient_po}, \ref{ass:monotonicityD_A}, and \ref{ass:crossworld_ind}, 
$$SACE=\mathbb{E}(Y \mid A=1, D=0)-\mathbb{E}(Y \mid A=0, D=0).$$
\end{proposition}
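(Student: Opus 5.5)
We will treat the two terms of the $SACE$ separately, using Lemma \ref{lemma:monotonicity_D} to turn the principal stratum $\{D^{a=0}=D^{a=1}=0\}$ into a single-world event. By Lemma \ref{lemma:monotonicity_D}, this stratum coincides with $\{D^{a=1}=0\}$. Hence
$$\mathbb{E}(Y^{a=1}\mid D^{a=0}=D^{a=1}=0)=\mathbb{E}(Y^{a=1}\mid D^{a=1}=0).$$
Under Assumption \ref{ass:swig_augmented}, $A$ is randomized, so $A$ is independent of $(Y^{a=1},D^{a=1})$ by the SWIG in Figure \ref{fig:swig_augmented}. Consistency and Assumption \ref{ass:positivity} then give $\mathbb{E}(Y\mid A=1,D=0)$. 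This disposes of the treated term.

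The control term is the hard one. The conditioning event $\{D^{a=0}=0,D^{a=1}=0\}$ is genuinely cross-world for $Y^{a=0}$, and this is where Assumption \ref{ass:crossworld_ind} is needed. I will condition on $U$ and write
$$\mathbb{E}(Y^{a=0}\mid D^{a=0}=D^{a=1}=0)=\sum_u \mathbb{E}(Y^{a=0}\mid D^{a=0}=0,D^{a=1}=0,U=u)\,\Pr(U=u\mid D^{a=0}=D^{a=1}=0).$$
In the inner expectation, Assumption \ref{ass:crossworld_ind} lets me drop $D^{a=1}=0$, leaving $\mathbb{E}(Y^{a=0}\mid D^{a=0}=0,U=u)$. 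In the weight, Lemma \ref{lemma:monotonicity_D} gives $\Pr(U=u\mid D^{a=1}=0)$, and Lemma \ref{lemma:useful_factorization} equates this with $\Pr(U=u\mid D^{a=0}=0)$. Recombining the sum yields $\mathbb{E}(Y^{a=0}\mid D^{a=0}=0)$. Randomization and consistency then turn this into $\mathbb{E}(Y\mid A=0,D=0)$.

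A technical point is positivity of the conditioning events. We need $\Pr(D^{a=1}=0)=\Pr(D=0\mid A=1)>0$ by Assumption \ref{ass:positivity}. For the strata with $U=u$, we sum only over $u$ with positive weight. I also need to check that $\Pr(D^{a=0}=0,D^{a=1}=0,U=u)>0$ whenever the weight is positive, which follows from the same stratum identity.

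The main obstacle I anticipate is the control term. There, Lemma \ref{lemma:useful_factorization} (the multiplicative survival property) is what allows the distribution of $U$ among the always-survivors to be replaced by its distribution among the $a=0$ survivors. The cross-world assumption alone handles only the conditional mean of $Y^{a=0}$ given $U$.
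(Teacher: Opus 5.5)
Your proposal is correct and follows essentially the same route as the paper: you reduce the treated term to a single-world quantity via Lemma \ref{lemma:monotonicity_D} and randomization, and you handle the control term by stratifying on $U$, dropping $D^{a=1}=0$ from the inner expectation via Assumption \ref{ass:crossworld_ind}, and rewriting the weights via Lemma \ref{lemma:monotonicity_D} and then Lemma \ref{lemma:useful_factorization}. The only differences are that you apply the cross-world independence and the monotonicity lemma in the opposite order (immaterial), and that you spell out the positivity bookkeeping, which the paper covers with a one-line remark.
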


\begin{proof}[Proof of Proposition \ref{prop:sace_identification}]
We have
\begin{align*}
&\mathbb{E}(Y^{a=0} \mid D^{a=0}=0,D^{a=1}=0)\\
&=
\sum_u \Bigl\{ \mathbb{E}(Y^{a=0} \mid D^{a=0}=0,D^{a=1}=0, U=u)\Pr(U=u \mid D^{a=0}=0,D^{a=1}=0)\Bigr\}\\
&\overset{(\text{L\ref{lemma:monotonicity_D}})}{=}
\sum_u \Bigl\{ \mathbb{E}(Y^{a=0} \mid D^{a=0}=0,D^{a=1}=0, U=u)\Pr(U=u \mid D^{a=1}=0)\Bigr\}\\
&\overset{(\text{A\ref{ass:crossworld_ind}})}{=}
\sum_u \Bigl\{ \mathbb{E}(Y^{a=0} \mid D^{a=0}=0, U=u)\Pr(U=u \mid D^{a=1}=0)\Bigr\}\\
&\overset{(\text{L\ref{lemma:useful_factorization}})}{=}\sum_u \Bigl\{ \mathbb{E}(Y^{a=0} \mid D^{a=0}=0, U=u)\Pr(U=u \mid D^{a=0}=0)\Bigr\}=\mathbb{E}(Y^{a=0} \mid D^{a=0}=0)
\end{align*}
Therefore,
\begin{align*}
&\mathbb{E}(Y^{a=1} \mid D^{a=0}=0, D^{a=1}=0)
\overset{(\text{L\ref{lemma:monotonicity_D}})}{=}
\mathbb{E}(Y^{a=1} \mid D^{a=1}=0)
\overset{(\text{A\ref{ass:swig_augmented}})}{=}
\mathbb{E}(Y \mid A=1, D=0)
\end{align*}
and
\begin{align*}
&\mathbb{E}(Y^{a=0} \mid D^{a=0}=0, D^{a=1}=0)=\mathbb{E}(Y^{a=0} \mid D^{a=0}=0)
\overset{(\text{A\ref{ass:swig_augmented}})}{=}
\mathbb{E}(Y \mid A=0, D=0).
\end{align*}
From which the result in Proposition \ref{prop:sace_identification} follows immediately. Assumption \ref{ass:positivity} ensures that the conditional quantities are well-defined.
\end{proof}

\newpage

\section{Identification when $L$ exists} \label{app:proof_identification_L}

Here we consider an identification result that is more general than the one presented in the main text. We consider the existence of a (measured) discrete variable $L$ that causes $D_A,D,$ and $Y$, and is affected by $A_D$.

\begin{assumption} \label{ass:swig_augmented_4arm_L}
For the hypothetical four-arm trial where both $A_D$ and $A_Y$ are randomized independently, assume a FFRCISTG model associated with the graph in Figure \ref{fig:dag_augmented_4arm_L}.
\end{assumption}

It follows from Assumption \ref{ass:swig_augmented_4arm_L} that the distribution of counterfactual variables factorizes according to the SWIG in Figure \ref{fig:swig_augmented_4arm_L}  and modularity holds under all possible interventions $a,a'\in\{0,1\}$.

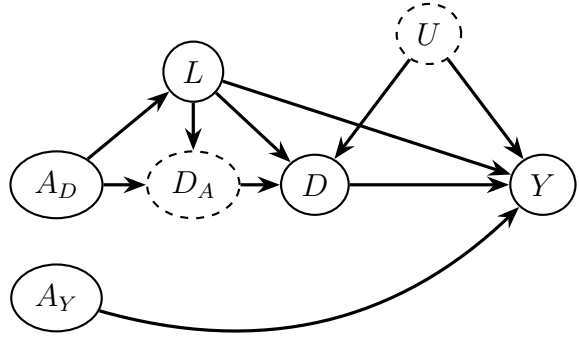
\begin{figure}[!ht] 
\centering
    \begin{tikzpicture}
       \begin{scope}[every node/.style={thick,draw}]
       \node[name=A_D, shape=ellipse] at (-3.4,0){$A_D$};
       \node[name=A_Y, shape=ellipse] at (-3.4,-1.5){$A_Y$};
       \node[name=L, shape=ellipse] at (-1.6,1.5){$L$};
       \node[name=D, shape=ellipse] at (0,0){$D$};
       \node[name=D_A, shape=ellipse, dashed] at (-1.6,0){$D_A$};
       \node[name=Y, shape=ellipse] at (3,0){$Y$};
       \node[name=U, shape=ellipse, dashed] at (1.5,2){$U$};
    \end{scope}
        
    \begin{scope}[>={Stealth[black]},
                  every node/.style={fill=white,circle},
                  every edge/.style={draw=black,very thick}]
        \path[->] (U) edge (Y);
        \path[->] (U) edge (D);
        \path[->] (A_D) edge (D_A);
        \path[->] (A_D) edge (L);
        \path[->] (D_A) edge (D);
        \path[->] (D_A) edge[bend right=25] (Y);
        \path[->] (D) edge (Y);
        \path[->] (A_Y) edge[bend right] (Y);
        \path[->] (L) edge (D_A);
        \path[->] (L) edge (D);
        \path[->] (L) edge (Y);
    \end{scope}
    \end{tikzpicture}
    \caption{Causal directed acyclic graph representing a hypothetical trial in which $A_D$ and $A_Y$ are randomized independently. Dashed nodes represent unmeasured variables. Here we consider also a variable $L$ which causes $D_A,D,$ and $Y$, and is affected by $A_D$.}
    \label{fig:dag_augmented_4arm_L}
\end{figure}

\begin{figure}[!ht] 
\centering
    \begin{tikzpicture}
       \begin{scope}[every node/.style={thick,draw}]
       \node[name=A_D,shape=swig vsplit] at (-5.8,0) {
                                              \nodepart{left}{$A_D$}
                                              \nodepart{right}{$a'$} };
        \node[name=A_Y,shape=swig vsplit] at (-5.8,-1.5) {
                                              \nodepart{left}{$A_Y$}
                                              \nodepart{right}{$a$} };
       \node[name=D, shape=ellipse] at (0,0){$D^{a_D=a'}$};
       \node[name=L, shape=ellipse] at (-2.75,1.8){$L^{a_D=a'}$};
       \node[name=D_A, shape=ellipse, dashed] at (-2.75,0){$D^{a_D=a'}_A$};
       \node[name=Y, shape=ellipse] at (4,0){$Y^{a_D=a',\ a_Y=a}$};
       \node[name=U, shape=ellipse, dashed] at (2,2){$U$};
    \end{scope}
        
    \begin{scope}[>={Stealth[black]},
                  every node/.style={fill=white,circle},
                  every edge/.style={draw=black,very thick}]
        \path[->] (U) edge (Y);
        \path[->] (U) edge (D);
        \path[->] (A_D) edge (D_A);
        \path[->] (D_A) edge (D);
        \path[->] (D_A) edge[bend right=20] (Y);
        \path[->] (D) edge (Y);
        \path[->] (A_Y) edge[bend right] (Y);
        \path[->] (A_D) edge (L);
        \path[->] (L) edge (D_A);
        \path[->] (L) edge (D);
        \path[->] (L) edge (Y);
    \end{scope}
    \end{tikzpicture}
    \caption{Single world intervention graph  corresponding to the directed acyclic graph in Figure \ref{fig:dag_augmented_4arm_L} representing the intervention on the components of treatment $A_D$ and $A_Y$ in the hypothetical four-arm trial when a variable $L$ exists. Dashed nodes represent unmeasured variables.}
    \label{fig:swig_augmented_4arm_L}
\end{figure}
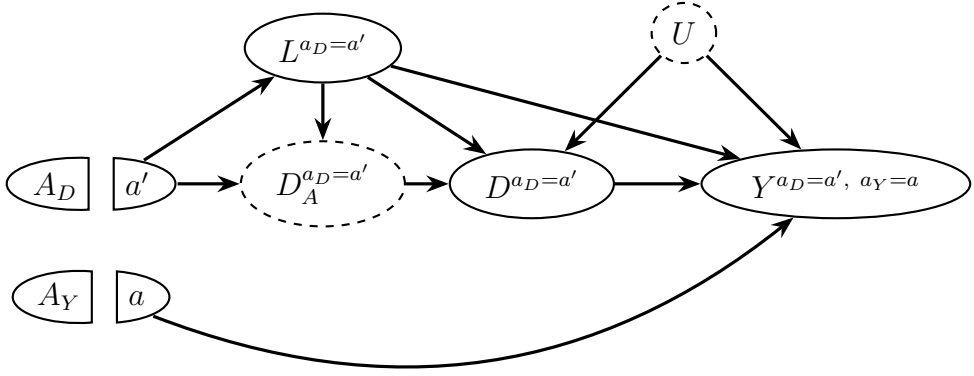

We will also make the following positivity assumption.
\begin{assumption} \label{ass:positivity_L}
If $\Pr(D=0,L=l)>0 \implies \Pr(D=0,L=l \mid A=a)>0$ for every $a$ and $l$. 
\end{assumption}

\begin{proposition} \label{prop:cse_identification_L}
Under Assumptions \ref{ass:positivity}, \ref{ass:D_Asufficient_po}, \ref{ass:treatment_decomposition}, \ref{ass:swig_augmented_4arm_L}, and \ref{ass:positivity_L},
\begin{align*}
&CSE(a')\\
&=\sum_l \{\mathbb{E}(Y \mid D=0, L=l, A=1)- \mathbb{E}(Y \mid D=0, L=l, A=0)\} \Pr(L=l \mid D=0, A=a').
\end{align*}
\end{proposition}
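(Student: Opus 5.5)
Our plan is to mimic the proof of Proposition \ref{prop:cse_identification}, now conditioning additionally on $L$. The key observation is that, in the SWIG of Figure \ref{fig:swig_augmented_4arm_L}, $L^{a_D=a'}$ depends only on $a'$, $U$ is exogenous and independent of $L^{a_D=a'}$ and $D_A^{a_D=a'}$, and $A_Y$ affects only $Y$. We therefore first establish an $L$-stratified version of Lemmas \ref{lemma:multiplicative_survival} and \ref{lemma:useful_factorization}:
\begin{align*}
\Pr(D^{a_D=a'}=0 \mid U=u, L^{a_D=a'}=l)=\Pr(D=0\mid U=u,D_A=0,L=l,A_D=a')\Pr(D_A=0\mid L=l,A_D=a').
\end{align*}
We obtain it by summing over $D_A$, using Assumption \ref{ass:D_Asufficient_po} (with Assumption \ref{ass:treatment_decomposition}, and the fact that $D$ does not depend on $a_Y$) to drop the $D_A=1$ term, and using the SWIG independence $D_A^{a_D=a'}\independent U \mid L^{a_D=a'}$. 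Modularity is what removes $a'$ from the first factor: $D$ has parents $D_A,L,U$ only, so $\Pr(D^{a_D=a'}=0\mid U=u,D_A^{a_D=a'}=0,L^{a_D=a'}=l)$ does not depend on $a'$. Since $U\independent L^{a_D=a'}$, Bayes' rule then gives that $\Pr(U=u\mid D^{a_D=a'}=0,L^{a_D=a'}=l)$ does not depend on $a'$; the factor depending on $a'$ cancels exactly as in Lemma \ref{lemma:useful_factorization}.

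Next we decompose the target by iterated expectation over $L$:
\begin{align*}
\mathbb{E}(Y^{a_D=a',a_Y=a}\mid D^{a_D=a'}=0)=\sum_l \mathbb{E}(Y^{a_D=a',a_Y=a}\mid D^{a_D=a'}=0,L^{a_D=a'}=l)\Pr(L^{a_D=a'}=l\mid D^{a_D=a'}=0).
\end{align*}
The weight is identified as $\Pr(L=l\mid D=0,A=a')$ by randomization of $A_D$ (Assumption \ref{ass:swig_augmented_4arm_L}) together with Assumption \ref{ass:treatment_decomposition}, since the event and $L$ do not involve $a_Y$.

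For the inner conditional mean we further sum over $U$. By the determinism (**) we may add $D_A^{a_D=a'}=0$ to the conditioning set, and modularity lets us replace the resulting term by $\mathbb{E}(Y\mid D=0,D_A=0,L=l,U=u,A_Y=a)$, which does not depend on $a'$. Combined with the $L$-stratified analogue of (*), this shows that the inner mean does not depend on $a'$. We may therefore set $a'=a$ and identify it as $\mathbb{E}(Y\mid D=0,L=l,A=a)$ from the observed arm with $A_D=A_Y=a$. Assumption \ref{ass:positivity_L} ensures that this conditional mean is well defined whenever the weight $\Pr(L=l\mid D=0,A=a')$ is positive. Taking the difference over $a\in\{0,1\}$ yields the stated formula.

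The main obstacle is the stratified analogue of (*). Specifically, we must check carefully that conditioning on $L^{a_D=a'}=l$ preserves the independence of $U$ from $D_A^{a_D=a'}$. We must also check that the first factor above is invariant in $a'$ given $L$: $a'$ reaches $D$ only through $D_A$ and $L$, both of which are held fixed. Both facts should follow from d-separation in the SWIG, because $U$ has no arrow into $L$ or $D_A$. We will verify this path by path before relying on it.
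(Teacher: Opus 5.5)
Your proposal is correct and follows essentially the same route as the paper's proof. Both arguments use an $L$-stratified multiplicative survival factorization, then Bayes' rule with the independence of $U$ and $L^{a_D=a'}$ to show that $\Pr(U=u \mid D^{a_D=a'}=0, L^{a_D=a'}=l)$ does not depend on $a'$, then the determinism step to add $D_A^{a_D=a'}=0$ before invoking modularity, and finally iterated expectation over $L$ with the weight identified as $\Pr(L=l \mid D=0, A=a')$. The only cosmetic difference is that you keep $A_D=a'$ in the first factor and remove it by modularity, whereas the paper writes $\Pr(D=0 \mid U=u, L=l, D_A=0)$ directly.
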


\begin{proof}[Proof of Proposition \ref{prop:cse_identification_L}]
Following a similar rationale to the proof of Lemma \ref{lemma:multiplicative_survival}, we can show that
\begin{align*}
&\Pr(D^{a_D=a'}=0 \mid U=u, L^{a_D=a'}=l)\\
&= \Pr(D^{a_D=a'}=0 \mid U=u, L^{a_D=a'}=l, D^{a_D=a'}_A=0) \Pr(D^{a_D=a'}_A=0 \mid U=u, L^{a_D=a'}=l)\\
&\quad + \Pr(D^{a_D=a'}=0 \mid U=u, L^{a_D=a'}=l, D^{a_D=a'}_A=1) \Pr(D^{a_D=a'}_A=1 \mid U=u, L^{a_D=a'}=l)\\
&\overset{(\text{A\ref{ass:swig_augmented_4arm_L}})}{=} \Pr(D^{a_D=a'}=0 \mid U=u, L^{a_D=a'}=l, D^{a_D=a'}_A=0) \Pr(D^{a_D=a'}_A=0 \mid U=u, L^{a_D=a'}=l)\\
&\quad + \Pr(D^{a_D=a', a_Y=a'}=0 \mid U=u, L^{a_D=a', a_Y=a'}=l, D^{a_D=a', a_Y=a'}_A=1)\\
&\quad \quad \Pr(D^{a_D=a', a_Y=a'}_A=1 \mid U=u, L^{a_D=a', a_Y=a'}=l)\\
&\overset{(\text{A\ref{ass:treatment_decomposition}})}{=} \Pr(D^{a_D=a'}=0 \mid U=u, L^{a_D=a'}=l, D^{a_D=a'}_A=0) \Pr(D^{a_D=a'}_A=0 \mid U=u, L^{a_D=a'}=l)\\
&\quad + \Pr(D^{a=a'}=0 \mid U=u, L^{a=a'}=l, D^{a=a'}_A=1) \Pr(D^{a=a'}_A=1 \mid U=u, L^{a=a'}=l)\\
&\overset{(\text{A\ref{ass:D_Asufficient_po}})}{=} \Pr(D^{a_D=a'}=0 \mid U=u, L^{a_D=a'}=l, D^{a_D=a'}_A=0) \Pr(D^{a_D=a'}_A=0 \mid U=u, L^{a_D=a'}=l)\\
&\overset{(\text{A\ref{ass:swig_augmented_4arm_L}})}{=} \Pr(D=0 \mid U=u, L=l, D_A=0) \Pr(D_A=0 \mid L=l, A_D=a').
\end{align*}
Following an argument similar to the one in the proof of Lemma \ref{lemma:useful_factorization},
\begin{align*}
&\Pr(U=u \mid D^{a_D=a'}=0, L^{a_D=a'}=l)\\
&=\frac{\Pr(D^{a_D=a'}=0 \mid U=u, L^{a_D=a'}=l) \Pr(U=u \mid L^{a_D=a'}=l)}{\sum_v \Pr(D^{a_D=a'}=0 \mid U=v, L^{a_D=a'}=l) \Pr(U=v \mid L^{a_D=a'}=l)}\\
&=\frac{\Pr(D=0 \mid U=u, L=l, D_A=0) \Pr(D_A=0 \mid L=l, A_D=a') \Pr(U=u \mid L^{a_D=a'}=l)}{\sum_v \Pr(D=0 \mid U=v, L=l, D_A=0) \Pr(D_A=0 \mid L=l, A_D=a') \Pr(U=v \mid L^{a_D=a'}=l)}\\
&=\frac{\Pr(D=0 \mid U=u, L=l, D_A=0) \Pr(U=u \mid L^{a_D=a'}=l)}{\sum_v \Pr(D=0 \mid U=v, L=l, D_A=0) \Pr(U=v \mid L^{a_D=a'}=l)}\\
&\overset{(\text{A\ref{ass:swig_augmented_4arm_L}})}{=}\frac{\Pr(D=0 \mid U=u, L=l, D_A=0) \Pr(U=u)}{\sum_v \Pr(D=0 \mid U=v, L=l, D_A=0) \Pr(U=v)}.
\end{align*}

This shows that the probability of $U$ conditional on $D^{a_D=a'}=0,L^{a_D=a'}$ does not depend on the value of $a'$
\begin{align}\label{eq:lemma2_sepeff_withL}
\Pr(U=u \mid D^{a_D=1}=0, L^{a_D=1}=l)=\Pr(U=u \mid D^{a_D=0}=0, L^{a_D=0}=l).\tag{$\star$}
\end{align}

Furthermore, for every $a'$,
\begin{align}\label{eq:determinism_D_DA_withL}
&D^{a_D=a'}=0 \overset{(\text{A\ref{ass:swig_augmented_4arm_L}})}{\iff} D^{a_D=a',a_Y=a'}=0 \overset{(\text{A\ref{ass:treatment_decomposition}})}{\iff} D^{a=a'}=0 \notag \\
&\overset{(\text{A\ref{ass:D_Asufficient_po}})}{\implies} D^{a=a'}_A=0 \overset{(\text{A\ref{ass:treatment_decomposition}})}{\iff} D^{a_D=a', a_Y=a'}_A=0 \overset{(\text{A\ref{ass:swig_augmented_4arm_L}})}{\iff} D^{a_D=a'}_A=0. \tag{$\star \star$}
\end{align}
For any $a,a' \in \{0,1\}$, the quantity
\begin{align*}
&\mathbb{E}(Y^{a_D=a', a_Y=a} \mid D^{a_D=a'}=0, L^{a_D=a'}=l)\\
&=
\sum_u \Bigl\{ \mathbb{E}(Y^{a_D=a', a_Y=a} \mid D^{a_D=a'}=0, L^{a_D=a'}=l, U=u)\Pr(U=u \mid D^{a_D=a'}=0, L^{a_D=a'}=l)\Bigr\}\\
&\overset{(\text{\ref{eq:lemma2_sepeff_withL}})}{=}
\sum_u \Bigl\{ \mathbb{E}(Y^{a_D=a', a_Y=a} \mid D^{a_D=a'}=0, L^{a_D=a'}=l, U=u)\Pr(U=u \mid D^{a_D=1}=0, L^{a_D=1}=l)\Bigr\}\\
&\overset{(\text{\ref{eq:determinism_D_DA_withL}})}{=}
\sum_u \Bigl\{ \mathbb{E}(Y^{a_D=a', a_Y=a} \mid D^{a_D=a'}=0, D^{a_D=a'}_A=0, L^{a_D=a'}=l, U=u)\\
&\quad \Pr(U=u \mid D^{a_D=1}=0, L^{a_D=1}=l)\Bigr\}\\
&\overset{(\text{A\ref{ass:swig_augmented_4arm_L}})}{=}\sum_u \Bigl\{ \mathbb{E}(Y \mid D=0, D_A=0, L=l, U=u, A_Y=a)\Pr(U=u \mid D^{a_D=1}=0, L^{a_D=1}=l)\Bigr\}
\end{align*}
is not a function of $a'$. From which follows that
\begin{align*}
\mathbb{E}(Y^{a_D=1, a_Y=a} \mid D^{a_D=1}=0, L^{a_D=1}=l)=\mathbb{E}(Y^{a_D=0, a_Y=a} \mid D^{a_D=0}=0, L^{a_D=0}=l).
\end{align*}
Thus, we can easily identify the following quantities:
\begin{align*}
&\mathbb{E}(Y^{a_D=0, a_Y=1} \mid D^{a_D=0}=0, L^{a_D=0}=l) =
\mathbb{E}(Y^{a_D=1, a_Y=1} \mid D^{a_D=1}=0, L^{a_D=1}=l)\\
&\overset{(\text{A\ref{ass:swig_augmented_4arm_L}})}{=}
\mathbb{E}(Y \mid D=0, L=l, A_D=1, A_Y=1)\overset{(\text{A\ref{ass:treatment_decomposition}})}{=} \mathbb{E}(Y \mid D=0, L=l, A=1),
\end{align*}
and
\begin{align*}
&\mathbb{E}(Y^{a_D=1, a_Y=0} \mid D^{a_D=1}=0, L^{a_D=1}=l) =
\mathbb{E}(Y^{a_D=0, a_Y=0} \mid D^{a_D=0}=0, L^{a_D=0}=l)\\
&\overset{(\text{A\ref{ass:swig_augmented_4arm_L}})}{=}
\mathbb{E}(Y \mid D=0, L=l, A_D=0, A_Y=0)\overset{(\text{A\ref{ass:treatment_decomposition}})}{=} \mathbb{E}(Y \mid D=0, L=l, A=0).
\end{align*}  

Finally,
\begin{align*}
&CSE(a')=\mathbb{E}(Y^{a_D=a',a_Y=1} \mid D^{a_D=a'}=0 )-\mathbb{E}(Y^{a_D=a',a_Y=0} \mid D^{a_D=a'}=0 )\\
&=\sum_l \{\mathbb{E}(Y^{a_D=a',a_Y=1} \mid D^{a_D=a'}=0, L^{a_D=a'}=l) - \mathbb{E}(Y^{a_D=a',a_Y=0} \mid D^{a_D=a'}=0, L^{a_D=a'}=l)\} \\
&\quad \Pr(L^{a_D=a'}=l \mid D^{a_D=a'}=0)\\
&=\sum_l \{\mathbb{E}(Y \mid D=0, L=l, A=1) - \mathbb{E}(Y \mid D=0, L=l, A=0)\} \Pr(L^{a_D=a'}=l \mid D^{a_D=a'}=0)\\
&\overset{(\text{A\ref{ass:swig_augmented_4arm_L}})}{=}\sum_l \{\mathbb{E}(Y \mid D=0, L=l, A=1) - \mathbb{E}(Y \mid D=0, L=l, A=0)\} \\
&\quad \Pr(L=l \mid D=0, A_D=a', A_Y=a')\\
&\overset{(\text{A\ref{ass:treatment_decomposition}})}{=}\sum_l \{\mathbb{E}(Y \mid D=0, L=l, A=1) - \mathbb{E}(Y \mid D=0, L=l, A=0)\} \Pr(L=l \mid D=0, A=a').\\
\end{align*}
Assumptions \ref{ass:positivity} and \ref{ass:positivity_L} ensure that the conditional quantities are well-defined.
\end{proof}

\newpage

\section{Relevance to instrumental variable methods} \label{app:instrumental_link}

Consider again the example in section ``Independent mechanisms for adherence''. We assumed that Assumption \ref{ass:swig_augmented}, \ref{ass:positivity}, and \ref{ass:D_Asufficient_po} hold. If Assumption \ref{ass:monotonicityD_A} (monotonicity) and Assumption \ref{ass:crossworld_ind} (the cross-world independence given $U$) also hold, then the $SACE$ is equal to the association between assigned treatment and outcome among participants who adhered (Proposition \ref{prop:sace_identification}). 
In this setting, the $SACE$ represents the effect of the assigned treatment among compliers ($D^{a=0}=D^{a=1}=0$), an estimand that is often targeted by instrumental variable methods\cite{angrist1995identification}. However, our functional in Proposition \ref{prop:sace_identification} is not the same as the traditional instrumental variable functional. This discrepancy can be explained by the fact that our result relies on assumptions that are different than the conventional instrumental variable assumptions\cite{angrist1995identification}. In particular, the monotonicity assumption we invoke (Assumption \ref{ass:monotonicityD_A}) is different from the conventional monotonicity assumption invoked in instrumental variable analyses. To explain the difference between our monotonicity assumption (Assumption \ref{ass:monotonicityD_A}) and the traditional monotonicity assumption, consider the indicator variable $M$, taking value $1$ if the participant actually used NRT in the first 4 weeks, and $M=0$ if the participant actually used e-cigarette in the first 4 weeks. To ensure that $M$ has the same support as $A$, we make the (strong) assumption that individuals who do not adhere to the assigned treatment, will then take the non-allocated treatment (and that no other treatment options exist). For example, if an individual assigned to NRT does not want to use NRT anymore, they will start using e-cigarettes.

In this setting, $M^{a=a'}=(1-a') \cdot D^{a=0}+a' \cdot (1-D^{a=1})$. The exclusion restrictions assumptions implied by the causal model (Assumption \ref{ass:swig_augmented}) and the determinism (Assumption \ref{ass:D_Asufficient_po}) impose some constraints on the possible values of the counterfactual variables. In the table, we only include counterfactual response types that are compatible with Assumptions \ref{ass:swig_augmented} and \ref{ass:D_Asufficient_po}.\footnote{We thank Professor Thomas Richardson, University of Washington, Seattle, USA for suggesting that we report such a table.}

\vspace{2mm}

\begin{table}[ht]
\centering
\begin{tabular}{l|cc|cc|cc|l}
\hline
Type & $D^{a=1}_A$ & $D^{a=0}_A$ & $D^{a=1}$ & $D^{a=0}$ & $M^{a=1}$ & $M^{a=0}$ & Names in IV literature\\ 
\hline
type 1 & \multirow{2}{*}{0} & \multirow{2}{*}{0} & 0 & 0 & 1 & 0 & compliers \\
        &                     &                     & 1 & 1 & 0 & 1 & defiers \\
\hline
type 2 & \multirow{2}{*}{1} & \multirow{2}{*}{0} & 1 & 0 & 0 & 0 & never takers \\
        &                     &                     & 1 & 1 & 0 & 1 & defiers \\
\hline
\color{blue} type 3 & \color{blue}  \multirow{2}{*}{0} & \color{blue}  \multirow{2}{*}{1} &\color{blue}  0 &\color{blue}  1 &\color{blue}  1 &\color{blue}  1 &\color{blue}  always takers \\
        &                     &                     &\color{blue}  1 &\color{blue}  1 &\color{blue}  0 &\color{blue}  1 &\color{blue}  defiers \\
\hline
type 4 & 1 & 1 & 1 & 1 & 0 & 1 & defiers \\
\hline
\end{tabular}
\caption*{Table representing the counterfactual response types of the variable $D_A$, and the corresponding possible values for the variables $D^{a=1},D^{a=0},M^{a=1},M^{a=0}$ under our causal model. The counterfactual response type ruled out by Assumption \ref{ass:monotonicityD_A} is colored in blue.}
\end{table}

Assumption \ref{ass:monotonicityD_A} rules out the existence of individuals with response type 3. In our example, it means that all individuals who do not deviate from assignment due to side effects when assigned to NRT ($D^{a=1}_A=0$), do not deviate from assignment due to side effects when assigned to e-cigarette ($D^{a=0}_A=0$). 

When we also invoke Assumption \ref{ass:monotonicityD_A}, then the group referred to in the instrumental variable literature as ``always takers'' ($M^{a=1}=1, M^{a=0}=1$) cannot exist. That is, no participant would always use NRT, regardless of treatment assignment; a participant who would not adhere when assigned e-cigarettes,  would not adhere if they were assigned NRT. If the only way treatment assignment affects adherence is through side effects, and e-cigarettes have less side effects than NRT, then, individuals who fail to adhere to e-cigarettes have characteristics that make them always non-adherent. Therefore, a participant who takes NRT when assigned e-cigarettes, will not take NRT when assigned to NRT. On the other hand, Assumption \ref{ass:monotonicityD_A} does not rule out the presence of defiers, which are precluded by conventional identification assumptions for complier average effects. 

The assumption that individuals who fail to adhere to the assigned treatment always use the non-allocated alternative is unrealistic; we only did this to align with the conventional instrumental variable assumptions. For example, in the actual trial\cite{hajek2022electronic}, only $9.8\%$ of participants in the NRT arm and $1.9\%$ in the e-cigarettes arm used non-allocated products at 4 weeks, not representing all non-adherent participants. As mentioned in the main text,  our identification results do not require this assumption: non-adherence $D=1$ can indicate that the participant takes a different stop-smoking product than the one assigned, that the participant refuses any help, or that they drop out of the study. When multiple options exist for non-adherence and $M$ does not have the same two levels as $A$, identifying the complier average causal effect using instrumental variables is not trivial.

\end{appendices}
\end{document}